\documentclass[twocolumn,aps,prr,letterpaper,superscriptaddress,longbibliography]{revtex4-2}
\usepackage{amsmath,amssymb,amsfonts}
\usepackage{amsthm, blkarray}
\usepackage{mathtools}
\usepackage{hyperref}
\usepackage{float}
\usepackage{graphicx}
\usepackage{enumerate}
\usepackage[dvipsnames]{xcolor}
\usepackage{physics}
\usepackage{xr}
\usepackage[dvipsnames]{xcolor}
\usepackage{tikz}
\usetikzlibrary{quantikz}

\setcounter{secnumdepth}{3}

\newcommand{\mi}{\mathrm{i}}
\newcommand{\me}{\mathrm{e}}

\newcommand{\cC}{\mathcal{C}}

\newcommand{\cE}{\mathcal{E}}
\newcommand{\cF}{\mathcal{F}}
\newcommand{\cG}{\mathcal{G}}

\newcommand{\cI}{\mathcal{I}}

\newcommand{\cM}{\mathcal{M}}

\newcommand{\cP}{\mathcal{P}}

\newcommand{\cS}{\mathcal{S}}

\newtheorem{theorem}{Theorem}
\newtheorem{lemma}[theorem]{Lemma}

\begin{document}
	
\title{Fault tolerance against amplitude-damping noise using Bacon-Shor codes}
	
\author{Long D. H. My}
\thanks{Both authors contributed equally.}
\affiliation{Yale-NUS College, Singapore}
\affiliation{Centre for Quantum Technologies, National University of Singapore, Singapore}
\author{Akshaya Jayashankar}
\thanks{Both authors contributed equally.}
\affiliation{Centre for Quantum Engineering, Research and Education, TCG CREST, Sector V, Salt Lake, Kolkata, India~700091}
\author{Prabha Mandayam}
\affiliation{Department of Physics, Indian Institute of Technology Madras, Chennai, India~600036}
\affiliation{Center for Quantum Information, Computing and Communication, Indian Institute of Technology Madras, Chennai, India~600036}
\author{Hui Khoon Ng}
\affiliation{Yale-NUS College, Singapore}
\affiliation{Centre for Quantum Technologies, National University of Singapore, Singapore}
\email{huikhoon.ng@nus.edu.sg}

\begin{abstract}
Designing efficient fault tolerance schemes is crucial for building useful quantum computers. Most standard schemes assume no knowledge of the underlying device noise and rely on general-purpose quantum error-correcting (QEC) codes capable of handling arbitrary errors. Biased-noise alternatives focus on only correcting a subset of some generic error basis (e.g., Pauli error basis), and lower resource needs by channeling the redundancy to dealing only with that subset. Yet, the most resource-efficient codes are expected to be those that directly target the specific noise process that afflicts the quantum device, rather than using a generic error-basis description. However, the question of whether such noise-adapted QEC protocols are amenable to fault-tolerant implementations remains largely unexplored. Here, we design a fault tolerance scheme based on the Bacon-Shor codes which can protect against amplitude-damping noise in the device. We construct a universal set of logical gadgets tolerant to multiple damping errors and estimate the fault tolerance threshold of our scheme. Our work thus establishes the possibility of achieving fault tolerance against amplitude-damping noise using noise-adapted quantum codes, while highlighting some of the unique challenges that arise in this context. 
\end{abstract}
	
\maketitle

\section{Introduction}

Quantum fault tolerance \cite{knill96, aharonov96, shor_FT,preskill_FT, gottesman_FT} provides the basic underpinnings of today's global efforts towards realizing robust and scalable quantum computers. The framework of fault tolerance (FT) builds upon the theory of quantum error correction (QEC) \cite{shor, calderbank, knill, gottesman} and offers a pathway to reliable quantum computation in the presence of noisy memory and gates. In a typical FT scheme, one chooses a QEC code to encode the computational---or ``logical"---information into the physical qubits. Encoded operations are performed on the logical qubits, interleaved with periodic error correction to prevent catastrophic accumulation and spread of errors. FT theory offers a suite of techniques to construct noise-resilient logical quantum circuits from noisy physical gates and imperfect error correction operations, provided the physical noise is below a certain threshold level.

The traditional approach to fault tolerance assumes no knowledge---apart from mild assumptions expected to be broadly applicable---of the underlying noise in the device, and relies on general-purpose codes capable of correcting arbitrary errors. This requirement for robustness against arbitrary noise aligns well with the general philosophy of fault tolerance: We ward against imprecision in our knowledge of the noise, by not relying on any such information. Nevertheless, this desire to accommodate arbitrary errors comes at the cost of requiring code choices that can be more resource-demanding. For example, the smallest qubit code capable of correcting arbitrary errors on any one physical qubit demands five physical qubits~\cite{laflamme}; instead, three physical qubits suffice if we assume that there is only bit-flip or only phase-flip noise, or four physical qubits if we assume only amplitude-damping noise \cite{leung}.

Potential resource savings by incorporating properties of the noise are expected to be important for early fault-tolerance experiments when qubit numbers and gate fidelities are limited. Many early-stage error correction demonstrations focus on correcting only say bit-flip or phase-flip noise (see, for example, among many others, Refs.~\cite{schindler2011experimental,Google2021,livingston2022}, because they have only very few qubits or can only do a few gates, too few for general-purpose codes. While some of the recent experiments have taken impressive steps towards fault-tolerant operation \cite{QuEra2024,Google2024}, for scaling to large-scale computers, \emph{noise-adapted} optimizations remain important. General-purpose codes are most efficient when the noise genuinely has no particular asymmetry or special properties, so that one is not ``wasting" the redundancy introduced to protect against noise in all possible ways. Yet, real device noise is never truly {symmetric}, and one could seek noise-adapted fault-tolerance designs to first reduce those noise asymmetries by correcting the dominant noise types, before concatenating---at a higher effective level---with a general-purpose code to protect against residual, now more symmetric, noise.

Previous noise-adapted schemes have largely approached the problem from {the following} view of asymmetry in the noise: We begin with an error basis---one that describes all possible errors---and declare that only a subset of the errors in that basis are dominant, and focus on correcting only those errors. This is the case for the biased Pauli noise models, where one begins with the Pauli error basis and chooses to focus on only $X$ or $Z$ errors. This motivated the many previous discussion on repetition codes~\cite{gourlay, aliferis_biased, aliferis_biasedExp} as well as rectangular surface codes for biased Pauli noise (see, for example, Refs~\cite{biasedNoise_2018, tuckett2019}). Similarly, in bosonic cat codes situations, where errors are decomposed as boson loss or phase errors, one chooses to focus on either loss or phase, depending on the dominant (sometimes engineered) noise process \cite{ofek2016extending,chamberland2022building,guillaud2019repetition,darmawan2021practical,grimm2020stabilization}.

Starting from an error basis and then choosing to drop certain types of errors has a strong advantage when designing FT schemes. One could start from standard general-purpose FT schemes and adopt those designs in the noise-adapted situation, by simply pruning out parts of the FT circuits that deal with the dropped error types. This is particularly obvious for CSS-type codes with biased Pauli noise---one drops all the circuit components that implement the $X$-type checks (which deal with $Z$ errors) when correcting only $X$ errors and not $Z$ errors. This, however, begs the question: rather than pruning existing FT designs, can we start from only the actual noise description---as a quantum process deduced from device characterization---and construct a fully fault-tolerant computational scheme?

We answered this question in part in our earlier work \cite{jayashankar2022achieving}, where we constructed a universal set of fault-tolerant logical operations as well as a fault-tolerant error correction gadget starting from the assumption that the noise is described by the amplitude-damping channel. Amplitude-damping noise cannot be phrased as biased noise in terms of the Pauli error basis, and is the prototypical example that many studies turn to when hoping to examine beyond-Pauli situations. There are thus well-studied codes available that focus on correcting amplitude-damping noise~\cite{leung, langshor, fletcherpaper, cartanform_2020}. Fault-tolerance studies of such codes were lacking, and our earlier work in \cite{jayashankar2022achieving} offered a first full construction of fault-tolerant gadgets for error correction and universal computation. There, we also pointed out the differences in circuit designs when discussing fault-tolerance for such channel-adapted situations, compared to standard Pauli or biased-Pauli scenarios. For example, that transversal constructions are always fault tolerant in Pauli settings, no longer hold true in channel-adapted situations.

In our present work, we extend and complete our answer for amplitude-damping noise by showing that we can scale up our fault-tolerant designs to larger codes. Our earlier work \cite{jayashankar2022achieving} examined only the case of the smallest code, which corrects a single amplitude-damping error in any of the physical qubits within a single gadget. Fault-tolerant schemes come also with a prescription of how to scale up the protection, by using more physical qubits and operations, to correct more errors and hence compute more accurately. While we expect, in real-use cases, to eventually move to general-purpose codes to continue scaling up, the asymmetry in the noise may be strong enough to require first increasing protection against the amplitude-damping noise. We address this aspect here, and develop a general FT scheme for amplitude-damping noise based on the two dimensional Bacon-Shor code~\cite{bacon_2006}, defined on a square $(t+1)\times (t+1)$ lattice of physical qubits and capable of correcting amplitude-damping erorrs on up to $t$ qubits. The smallest instance, of the $2\times 2$ lattice ($t=1$), corresponds to the 4-qubit code used as the basis of our earlier work \cite{jayashankar2022achieving}. Here, we explain how to adapt our earlier designs for $t>1$, and construct a fault-tolerant error-correction gadget as well as a full set of universal logical operations. We demonstrate the scaling performance by estimating the logical infidelity of quantum memory and showing that the infidelity falls---below specific threshold levels---as the code size grows.
	
The rest of the paper is organized as follows. We start, in Sec.~\ref{Sec:prelims}, by reviewing the amplituding-damping noise as well as the $(t+1)\times (t+1)$ Bacon-Shor code.  In Sec.~\ref{sec:ecunit}, we describe our construction of a fault-tolerant error correction gadget for the Bacon-Shor code for general $t$. In Sec.~\ref{sec:czgadget}, we focus on our construction of the logical \textsc{CZ} gadget, leaving the details of other logical gadgets to the appendix (Apps.~\ref{sec:ftgadgets} and \ref{sec:universal}) as they generalize easily from our earlier work. Finally, we estimate the logical fidelity and pseudothreshold for the quantum memory task using our scheme in Sec~\ref{sec:threshold}, and conclude in Sec.~\ref{sec:conclude}.

\section{Preliminaries}\label{Sec:prelims}
We begin by reviewing the assumed noise model, the Bacon-Shor codes, and the ideal (no-noise) error-correction procedure, in preparation for the description of our fault-tolerant construction. 
	
\subsection{Noise model}\label{sec:noisemodel}
	
We assume amplitude damping as the dominant noise affecting every physical qubit in our quantum computer. Physical processes like spontaneous decay give rise to amplitude-damping noise and this is the leading source of noise in some quantum computing platforms (see, for example, Ref.~\cite{scqubit}). Amplitude-damping noise is a single-qubit channel described by the completely positive (CP) and trace-preserving (TP) map,
\begin{align}
		\label{eq:ampdamp}
		\cE_\mathrm{AD}(\,\cdot\,)&=E_0(\,\cdot\,)E_0^\dagger + E_1(\,\cdot\,)E_1^\dagger,\\
		\textrm{with }		
		E_0&\equiv\frac{1}{2}{\left[(1+\!\sqrt{1-p})I + (1-\!\sqrt{1-p}) Z\right]}\nonumber\\
		&=\ketbra{0}{0}+\sqrt{1-p}\ketbra{1}{1}, \nonumber \\
		\textrm{and } E_1 &\equiv \frac{1}{2}\sqrt{p}(X+ \mi Y)=\sqrt p \ketbra{0}{1} \equiv \sqrt{p}E .\nonumber
\end{align}
Here $I$ is the single-qubit identity, $X, Y$, and $Z$ are the Pauli operators, and $\{|0\rangle, |1\rangle\}$ is the $Z$ eigenbasis. $p\in[0,1]$ is the damping parameter, assumed to be small for the weak noise regime needed for quantum computing. Below, we write $E\equiv \ketbra{0}{1}$ for the damping operator.
	
Similar to the the FT scheme of Ref.~\cite{jayashankar2022achieving}, we use the following unencoded (physical) operations to build the FT encoded gadgets:
\begin{equation} \label{eq:fault-tol}
		\{ \cP_{|+\rangle}, \cP_{|0\rangle}\} \cup \{ \cM_{X},\cM_{Z},\textsc{CNOT},\textsc{CZ},X, Z, S, T\} .
\end{equation}
Here,  $\cP_{|+\rangle}$ and $\cP_{|0\rangle}$  refer to the preparation of the eigenstates of $X$ and $Z$, respectively, and $\cM_{X}$ and  $\cM_{Z}$ are measurements in the $X$ and $Z$ bases, respectively. \textsc{CNOT} $\equiv\ketbra{0}{0}\otimes I+\ketbra{1}{1}\otimes X$ is the two-qubit controlled-\textsc{NOT} gate, \textsc{CZ} $\equiv\ketbra{0}{0}\otimes I+\ketbra{1}{1}\otimes Z$ is the two-qubit controlled-$Z$ gate, $S\equiv \ketbra{0}{0}+\mi\ketbra{1}{1}$ is the phase gate, and $T\equiv\ketbra{0}{0}+\me^{-\mi\pi/4}\ketbra{1}{1}$ is the $\pi/8$ gate.
	
We assume that errors in the state preparations, gates, and measurements are all due to $\cE_\mathrm{AD}$. Specifically, a noisy physical gate $\cG$ is modeled as the ideal gate followed by the noise $\cE_{AD}$ acting independently on each qubit participating in the gate. A noisy measurement is modeled as an ideal measurement preceded by the noise $\cE_\mathrm{AD}$, while a noisy preparation is an ideal preparation followed by $\cE_\mathrm{AD}$. We emphasize that the noise acts on each physical qubit individually, and is assumed to be time and gate independent. More generally, one can regard the parameter $p$ as an upper bound on the level of amplitude damping over time and gate variations.
	
In our discussion below, it is useful to organize the effect of $\cE_{AD}$ in powers of $p$. To that end, we write
\begin{align}\label{eq:ampdamp2}
		\cE_\mathrm{AD}(\,\cdot\,)&=\tfrac{1}{4}{\left(1+\sqrt{1-p}\right)}^2\cI(\,\cdot\,)+p\cF_a(\,\cdot\,)\\
		&\quad + \tfrac{p}{2}\cF_z(\,\cdot\,)+{\left[\tfrac{p^2}{16}+O(p^3)\right]}Z(\,\cdot\,)Z,\nonumber
\end{align}
where $\cI(\cdot)\equiv (\cdot)$ is the identity channel, $\cF_a(\cdot)\equiv E(\cdot)E^{\dagger}$ is the \emph{damping error}, and $\cF_z(\cdot)\equiv \tfrac{1}{2}[(\cdot)Z + Z(\cdot)]$ is referred to as the \emph{off-diagonal error}. The latter is sometimes referred to as the backaction error in the context of superconducting qubits~\cite{liang}. Written in this manner, $\cE_{AD}$ can be thought of as leading to no error (or order 1) when the $\cI$ part occurs, a first-order (in $p$) error when the $\cF_a$ or $\cF_z$ part occurs, and a second-order error when a $Z$ error occurs.

\subsection{Bacon-Shor codes}\label{sec:baconshor}

Bacon-Shor codes are a class of stabilizer codes originally proposed in the context of \emph{subsystem} quantum error correction~\cite{bacon2006operator, kribs2005operator}. Subsequently, it was shown that the class of Bacon-Shor codes achieves a higher fault-tolerance threshold than subspace stabilizer codes while lending itself to simpler fault-tolerant implementations that require only nearest-neighbour two-qubit measurements \cite{aliferis_baconshor}. The $d \times d$ Bacon-Shor code is a distance-$d$ CSS code encoding one logical qubit in $d^{2}$ physical qubits. For small lattice sizes, the Bacon-Shor code was shown to yield an FT threshold of $\sim 10^{-4}$ against stochastic adversarial noise~\cite{aliferis_baconshor}. More recently, the $[[4,1,2]]$ Bacon-Shor code was shown to achieve a pseudothreshold between $3$-$6\%$ in the presence of depolarizing noise~\cite{shlosberg2021}. However, increasing the size of the code does not improve the logical rate indefinitely; rather, it reverses beyond a point. The code thus does not have a quantum accuracy threshold (i.e., the physical noise level below which increasing the code size reduces the logical error without limits), a problem which can be addressed to some extent by concatenation~\cite{aliferis_baconshor} or by lattice-surgery techniques~\cite{gidney2023}. We will see this absence of a true FT threshold also in our construction, but this reversal point occurs at a large code distance, likely beyond the regime we expect to use such noise-adapted codes (see Sec.~\ref{sec:threshold}).

It was first noted in Ref.~\cite{duan2010multi} that the $(t+1)\times(t+1)$ Bacon-Shor code can correct $t$ amplitude-damping errors. Note that a code is said to correct $t$ amplitude damping errors if it is able to reduce the infidelity of the encoded state, to order $O(p^{t+1})$ after the error correction. Subsequently, explicit error correction schemes were constructed for the Bacon-Shor code in Ref.~\cite{piedrafita2017reliable}.

\begin{figure}
		\includegraphics[scale=0.5]{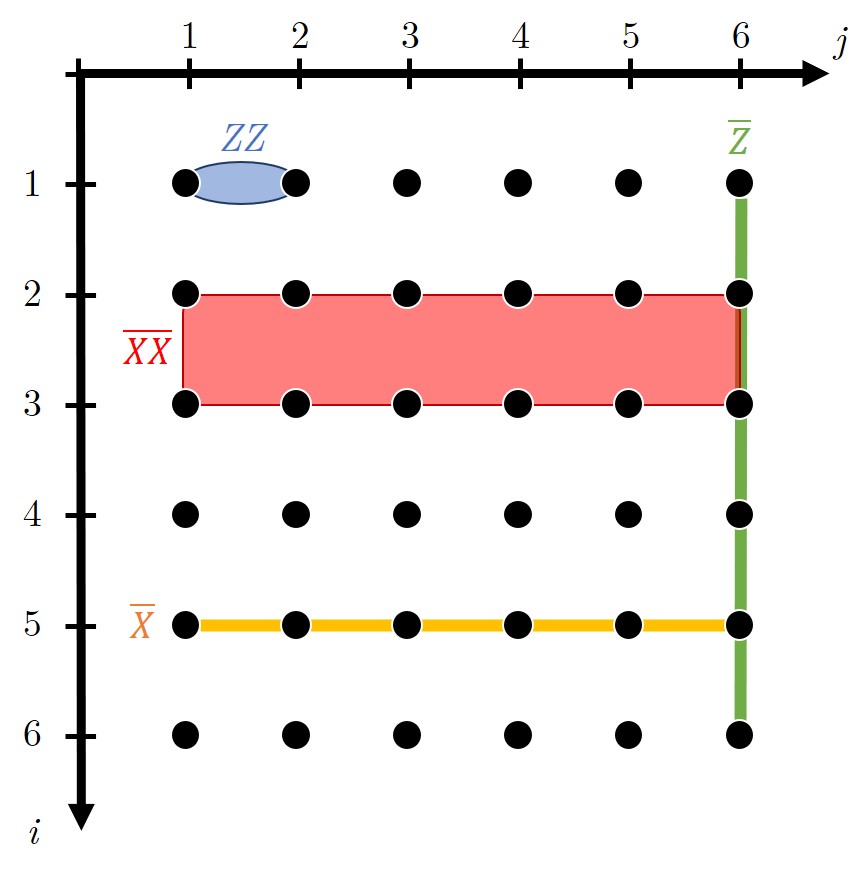}
		\caption{\label{fig:BSCode}A $6 \times 6$ Bacon-Shor code in the $Z$ gauge. A $Z$-type stabilizer (marked $ZZ$ in blue), an $X$-type stablizer (marked $\overline{X}\overline{X}$ in red), the logical $X$ operator (marked $\overline{X}$ in orange), and the logical $Z$ operator (marked $\overline{Z}$ in green) are shown.}
\end{figure}
	
Here, we follow the proposal of Refs.~\cite{duan2010multi,piedrafita2017reliable}, which employs the Bacon-Shor code with the choice of the $Z$ gauge; see Fig.~\ref{fig:BSCode}. The two-dimensional $n\times n$ Bacon-Shor code in the $Z$ gauge is a subspace stabilizer code that encodes one qubit into $n^2$ qubits arranged in an $n\times n$ square lattice. The stabilizer group of the code is generated by the operators $Z_{i,j}Z_{i,j+1}$ [$1\leq i \leq n, 1 \leq j \leq n-1$, with $i(j)$ as the row(column) index], acting on neighboring qubits in the same row, and $X_{i,\star}X_{i+1,\star} (1 \leq i \leq n-1)$, acting on all the qubits in two neighboring rows ($\star$ denotes all values of that index slot). This code can be thought of as the concatenation of a length-$n$ repetition code (within each row) that protects against $X$ errors, with a length-$n$ repetition code protecting against $Z$ errors (across the rows). The code space $\cC$ is thus the span of the two states,
\begin{align}\label{eq:codewords}
		\ket{\pm}_L&\equiv {\left[\tfrac{1}{\sqrt{2}}(\ket{0}_\mathrm{row}\pm\ket{1}_\mathrm{row})\right]}^{\otimes n},
\end{align}
where $\ket{0}_\mathrm{row}\equiv \ket{0}^{\otimes n}$ and $\ket{1}_\mathrm{row}\equiv \ket{1}^{\otimes n}$ are the length-$n$ repetition code codewords for the qubits in a single row; the outer tensor product in Eq.~\eqref{eq:codewords} goes over the rows.
The logical operators are identified as $\overline{X}\equiv X_{i,\star}$ and $\overline{Z}\equiv Z_{\star,j}$, denoting, respectively, $X$ operators acting on all the qubits in any one row of the lattice, and $Z$ operators acting on all the qubits in any one column of the lattice. This gives us also a useful shorthand: We write the $X_{i,\star}X_{i+1,\star}$ stabilizers simply as $\overline{XX}$, putting in an $i$ index only when referring to a particular pair of rows.
	
The fact that the $n\times n$ Bacon-Shor code can correct amplitude-damping errors on $t\equiv n-1$ qubits can be seen by checking that the code and the amplitude-damping error set satisfy the Knill-Laflamme error-correction conditions \cite{knill1997theory} up to order $O(p^t)$ \cite{gottesman97, duan2010multi}. In fact, the well-known $4$-qubit code \cite{leung} that corrects a single amplitude-damping error is a $2\times 2$ Bacon-Shor code. We note in contrast that the $n\times n$ Bacon-Shor code can correct \emph{arbitrary} errors on only $\sim\!n/2$ qubits.

	
\subsection{Ideal EC procedure}\label{sec:idealEC}
	
An EC procedure to correct amplitude-damping errors with Bacon-Shor codes was constructed in Ref.~\cite{piedrafita2017reliable}, employing only Pauli measurements and gates. This EC procedure is not fault tolerant and functions correctly only when there are no faults in the EC circuit itself. We thus refer to this as the ``ideal" EC procedure. We note that this ideal EC procedure may be suboptimal (see, for example, Ref.~\cite{fletcherpaper} for a numerically optimized recovery) in that it reduces the infidelity of the output state to the requisite $O(p^{t+1})$ but it need not give the minimal infidelity among all recovery maps. Its simple structure, however, provides the starting point for our fault-tolerant version below. We thus briefly review the components of the ideal EC scheme here; more details are given in App.~\ref{app:err_corr}. 
	
The ideal EC procedure involves three steps to deal with the $\cF_a$, $\cF_z$, and $Z$ errors (see Eq.~\ref{eq:ampdamp2}) that can occur from amplitude-damping noise:
\begin{enumerate}[(1)]
		\item \textit{Damping syndrome extraction}. We measure all $Z_{i,j}Z_{i,j+1}$ stabilizers to deduce which rows of qubits suffered damping errors $\cF_a$. This is followed by individual $Z$ measurements on all qubits in the damped rows to further locate the errors. We then apply an $X$ to each damped qubit to remove the $X$ part of the damping error $E=X(I-Z)$, leaving the $(I-Z)$ part for the next step. 
		\item \textit{$\overline{XX}$ measurements}. Next, we measure all $\overline{XX}$ stabilizers, to simultaneously perform damping-error recovery and deal with $Z$-type errors (that is, $Z$, $\cF_Z$, or $I\pm Z$ errors). On damped rows (those which had $\cF_a$ errors), as discovered by the earlier step, the $\overline{XX}$ measurement projects the state back to the correct code state, up to $Z$ errors that can be deduced from the measurement outcomes. On all rows, the $\overline{XX}$ measurement has the effect of either annihilating the $\cF_z$ errors, or detecting $Z$ errors. 
		\item \textit{Recovery for $Z$ errors}. The $Z$ errors can be deduced and remedied in the recovery step by combining the syndrome information from all the $ZZ$ and $\overline{XX}$ measurements of the earlier two steps.
\end{enumerate}
	
The ideal EC procedure is clearly not fault tolerant. For example, a single $\cF_a$ in the data qubits right before the $\overline{XX}$ measurement is converted by that measurement into an $X$ error. This $X$ error is uncorrectable for the $(t+1)\times (t+1)$ Bacon-Shor code tailored for amplitude-damping noise as it can be confused with $t$ damping errors. In the following section, we describe a construction of the EC gadget that \emph{is} fault tolerant.

\section{Fault-tolerant EC gadget}
\label{sec:ecunit}
	
\begin{figure*}
\includegraphics[width=\textwidth]{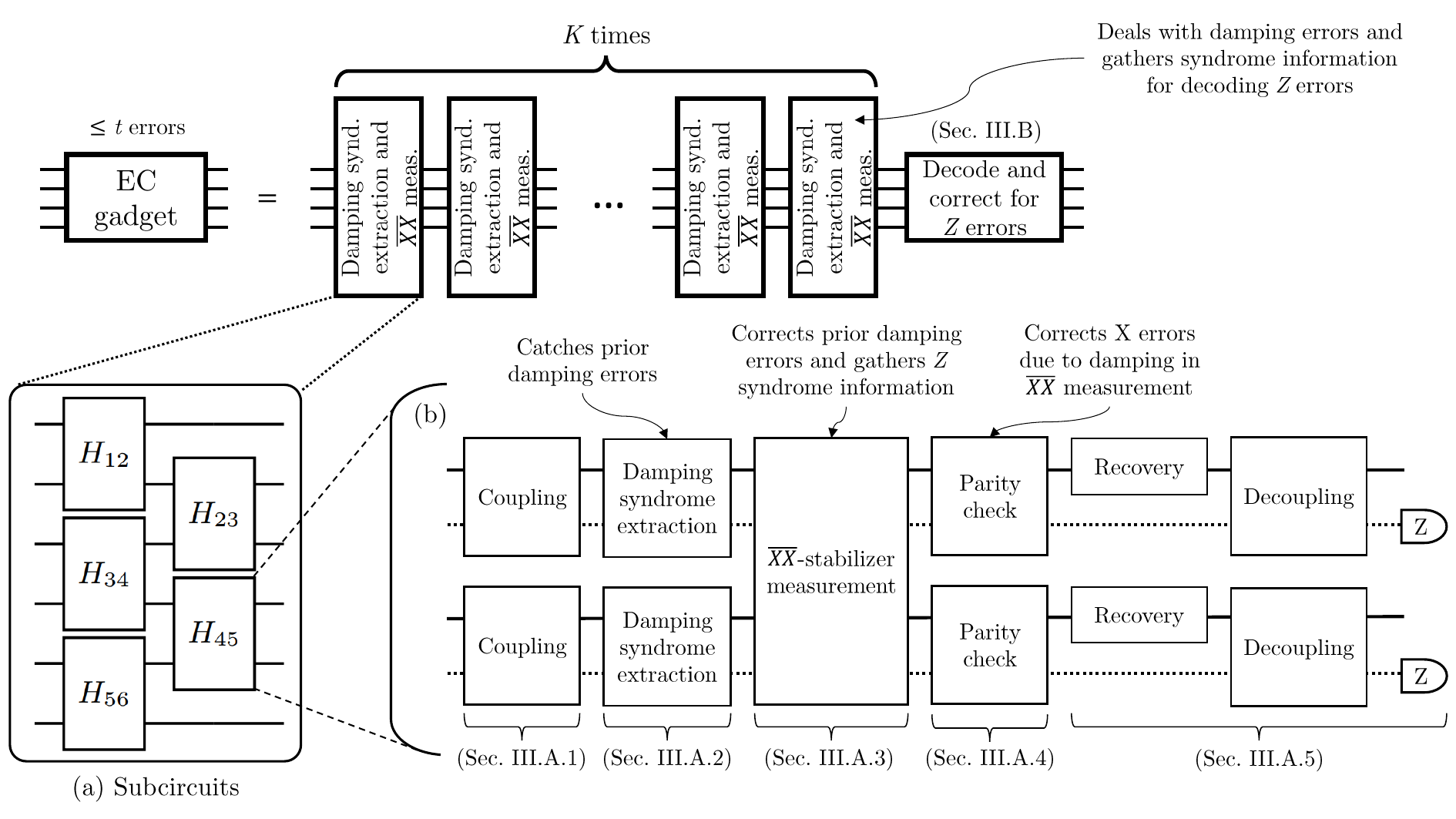}
\caption{\label{fig:ECBigPicture} Schematic overview of the fault-tolerant EC gadget, with references to the sections in the main text where the individual components are explained. Each solid horizontal line represents a row of $t+1$ data qubits; each dotted line represents a row of $t$ ancillary qubits.}
\end{figure*}
	
Our goal here is to construct an EC gadget that satisfies standard FT requirements (see App.~\ref{app:ftprops} for a statement of these requirements). A schematic overview of the whole fault-tolerant EC gadget is given in Fig.~\ref{fig:ECBigPicture}. We first note that the ideal EC procedure described above is composed of smaller subcircuits $H_{i,i+1}$ [see Fig.~\ref{fig:ECBigPicture}(a)] since the stabilizer measurements involve at most two neighboring rows. Our first step towards constructing a fault-tolerant \textsc{EC} gadget is thus to make the subcircuits fault tolerant against the first-order damping error $\cF_a$, while containing the spread of $Z$ errors. We describe this construction in Sec.~\ref{sec:subcircuit}. Subsequently, in Sec.~\ref{sec:decodingz}, we discuss how to obtain and decode the syndrome for $Z$ errors in a fault-tolerant manner using these subcircuits, thus completing our construction of a fault-tolerant  $\textsc{EC}$ gadget as a combination of individual subcircuits fault tolerant against $\cF_a, \cF_z$, and $Z$ errors.

\subsection{Fault-tolerant subcircuits}\label{sec:subcircuit}

A subcircuit is fault tolerant against $\cF_a$ if it satisfies the following condition: For an incoming state to the subcircuit with $s$ $\cF_a$ errors, if $r$ $\cF_a$ errors occur during the subcircuit such that $s+r\leq t$, the output of the subcircuit differs from the ideal output by at most $r$ $\cF_a$ errors. This property guarantees that the whole EC  circuit constructed from many subcircuits together is also fault tolerant against $\cF_a$.
	
To achieve this, we exploit a similar technique used in Ref.~\cite{jayashankar2022achieving} to make the $4$-qubit code fault-tolerant against a single damping error: We temporarily extend each row to a $(2t+1)$-repetition code, with the use of $t$ ancillary qubits, which protects against $t$ $X$ errors without confusion with $t$ damping errors. We switch back to having only $t+1$ qubits in each row at the end of the subcircuit. Each subcircuit $H_{i,i+1}$ is made up of several components as shown in Fig.~\ref{fig:ECBigPicture}(b). We describe each component separately below.  

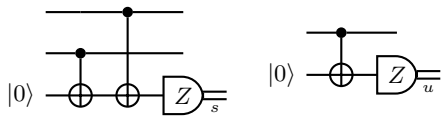
\begin{figure}
	\begin{quantikz}[row sep=0.05cm, column sep = 0.3cm]
		\ghost{X} & \lstick{} & \qw & \ctrl{2} & \qw \\
		\ghost{X} & \lstick{} & \ctrl{1} & \qw & \qw \\
		\ghost{X} & \lstick{$\ket{0}$} & \targ{} & \targ{} & \meterD{Z} & \cw{s}
	\end{quantikz}\quad
	\begin{quantikz}[row sep=0.05cm, column sep = 0.3cm]
		\ghost{X} & \lstick{} & \ctrl{1} & \qw \\
		\ghost{X} & \lstick{$\ket{0}$} & \targ{} & \meterD{Z} & \cw{u}
	\end{quantikz}
	\caption{Circuits for (a) a parity measurement between two qubits (left circuit), and (b) the non-destructive measurement of a qubit in the $Z$ basis (right circuit).}
	\label{fig:paritymeas}
\end{figure}

\subsubsection{Coupling} \label{sec:subC_coupling}
	
In the first step, we extend each row from $t+1$ to $2t+1$ qubits by coupling $t$ out of the $t+1$ data qubits to a set of $t$ ancillary qubits initialized to the $\ket{0}$ state. Every data qubit in the set is coupled to a single ancillary qubit via a \textsc{CNOT}, with the data qubit as the control and the ancilla as the target. Each row thus temporarily becomes a $(2t+1)$-repetition code capable of protecting against $t$ $X$ errors. We refer to these ancillary qubits coupled to the data qubits as the \emph{coupling} ancillas, to distinguish them from ancillary qubits used elsewhere. The $(2t+1)$ data and coupling ancillas are referred to as \emph{extended} data qubits.

\subsubsection{Damping syndrome extraction}\label{sec:damping}
	
Next, we have a damping syndrome extraction step to detect $\cF_a$ errors occurring before and after the coupling step. Note that, in the coupling step, the \textsc{CNOT}s can propagate damping errors from the data qubits to the coupling ancillas, turning $s$ damping errors in the incoming state to $2s$ damping errors in the set of extended data qubits. If these errors are not taken care of before the $\overline{XX}$ measurement, they get converted into $2s$ $X$ errors that, together with further faults along the way, may exceed the capacity of the $(2t+1)$-repetition code. A damping detection step is thus necessary immediately after the coupling step to catch and fix these errors, in addition to any incoming damping errors.
	
The damping detection is performed for each row separately. We start by measuring $Z_{i,j}Z_{i,j+1} (1\leq j \leq t)$ stabilizers between two neighboring data qubits in the same row, using the parity measurement circuit of Fig.~\ref{fig:paritymeas}(a), employing an additional ancilla initialized to $\ket{0}$. At this stage, we only need to measure the parity of the data qubits, and not the coupling ancillas. Measuring the parity of the data qubits detects any of the $s \leq t$ damping errors occurring at the input or after the \textsc{CNOT} gates. The damping errors arising independently on the coupling ancillas (at most $t$) can be identified and corrected in the subsequent units once the errors on the data qubits have been fixed. Parity checks on the data qubits only thus suffice here.

One round of damping syndrome extraction applied to one row of data qubits involves $t$ parity measurements between adjacent pairs of qubits, giving a length-$t$ syndrome sequence of $\pm 1$. A nontrivial sequence indicates that damping errors have occurred in the data qubits and/or during the measurement. The state of the extended data qubits in that row is then reduced to a $Z$ eigenstate (i.e., a string of $0$s and $1$s). At this point, we measure both the data qubits and the coupling ancillas in the $Z$ basis using the circuit of Fig.~\ref{fig:paritymeas}(b); this measurement tells us which qubits have been damped to $\ket{0}$. Finally, we apply $X$ to the damped qubits to remove the $X$ part of the damping error $E=X(I-Z)$. This ensures we do not carry forward more $X$ errors than can be handled by the repetition code, as explained below.
	
If a trivial all $+1$ syndrome sequence is obtained after the first round of damping syndrome extraction, we cannot immediately conclude that there is no damping error---additional measurement faults could have given a falsely trivial syndrome. Instead, we have to repeat the parity measurements and keep doing so if we continue to get a trivial syndrome, until we have done $t$ rounds. At this point, we can conclude that the row is undamped: Since at least one fault is needed for a nontrivial sequence, either the row is genuinely undamped, or more than $t$ faults have occurred so far and this is beyond the scope of an FT syndrome measurement (see property P3 in App.~\ref{app:ftprops}).
	
If a nontrivial sequence is obtained at any point during the repeated rounds of damping syndrome extraction, we stop the parity measurements and non-destructively measure all the extended data qubits in the faulty row in the $Z$ basis to detect which qubits have been damped. 
	
We note that, for our noise model, if a qubit has been damped to $\ket{0}$, a $Z$ measurement---ideal or faulty---always gives the $+1$ outcome, and we never miss a damping event. However, a faulty $Z$ measurement could give $+1$ outcome for a qubit that started in the state $\ket{1}$ and we might wrongly apply an $X$ to an undamped qubit, effectively turning it into a damped qubit. We could have one of two possibilities happen here: (i) At least one parity measurement has a non-trivial outcome and the entire row of extended data qubits is measured; in this case, the number of errors in the outgoing state (equivalently, the number of errors in the state entering the next step) is limited by the number of faulty $Z$ measurements. (ii) No parity measurement has a non-trivial outcome; in this case, the number of errors in the outgoing state is limited by the number of errors on the coupling ancillas during the parity measurements and on the data qubits during the last layer of the parity measurements. In both cases, we note that the number of errors in the outgoing state is no more than the number of errors occurring inside the syndrome extraction step.

\subsubsection{$\overline{XX}$ measurement} \label{sec:XXmeas}

\begin{figure}
\begin{quantikz}[row sep=0.05cm, column sep = 0.2cm]
	& \lstick{1} & \targ{} & \qw & \qw & \qw & \qw & \qw & \qw & \qw & \qw & \qw & \qw \\
	& \lstick{2} & \qw & \qw & \targ{} & \qw & \qw & \qw & \qw & \qw & \qw & \qw & \qw \\
	& \lstick{3} & \qw & \qw & \qw & \qw & \targ{} & \qw & \qw & \qw & \qw & \qw & \qw \\
	& \lstick{4} & \qw & \qw & \qw & \qw & \qw & \qw & \targ{} & \qw & \qw & \qw & \qw \\
	& \lstick{5} & \qw & \qw & \qw & \qw & \qw & \qw & \qw & \qw & \targ{} & \qw & \qw \\
	& \lstick{6} & \qw & \targ & \qw & \qw & \qw & \qw & \qw & \qw & \qw & \qw & \qw & \qw\\
	& \lstick{7} & \qw & \qw & \qw & \targ{} & \qw & \qw & \qw & \qw & \qw & \qw & \qw \\
	& \lstick{8} & \qw & \qw & \qw & \qw & \qw & \targ{} & \qw & \qw & \qw & \qw & \qw \\
	& \lstick{9} & \qw & \qw & \qw & \qw & \qw & \qw & \qw & \targ{} & \qw & \qw & \qw \\
	& \lstick{10} & \qw & \qw & \qw & \qw & \qw & \qw & \qw & \qw & \qw & \targ{} & \qw \\
	\lstick{$\ket{+}$} & \ctrl{1} & \ctrl{-10} & \ctrl{-5} & \ctrl{-9} & \ctrl{-4} & \ctrl{-8} &\ctrl{-3} & \ctrl{-7} & \ctrl{-2} & \ctrl{-6} & \ctrl{-1} & \ctrl{1} & \meterD{X} \\
	\lstick{$\ket{0}$} & \targ{} & \qw & \qw & \qw & \qw & \qw & \qw & \qw & \qw & \qw & \qw & \targ{} & \meterD{Z}
\end{quantikz}

\includegraphics[scale=0.6]{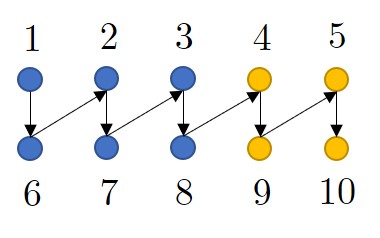}
\caption{The $\overline{XX}$-stabilizer measurement between a pair of successive rows, shown for the $3\times 3$ Bacon-Shor code. The top panel shows the circuit involving the $2\times 5=10$ extended data qubits that make up two successive rows of a $3\times 3$ Bacon-Shor code, with one ancillary qubit and one flag qubit (bottom qubits). The bottom panel shows the order in which the extended data qubits are coupled via the \textsc{CNOT}s to the ancillary qubit to assure fault-tolerance properties (see Sec.~\ref{sec:XXmeas}). The blue circles are the data qubits; the yellow ones are the coupling ancillas; the arrows indicate the order in which the ancillary qubit is coupled to the extended data qubits in the two rows.}
\label{fig:XX_stabilizer}
\end{figure}

We now move to the measurement of the $\overline{XX}$ stabilizer. As in the ideal EC procedure, this step serves to recover code states from damping errors and deal with $Z$-type errors. A single ancillary qubit is initialized in $\ket{+}$ and then coupled, via \textsc{CNOT}s, to each of the extended data qubits in two successive rows; see Fig.~\ref{fig:XX_stabilizer} for the example of the $3 \times 3$ Bacon-Shor code. The ancillary qubit is then measured in the $X$ basis. In the absence of faults in the $\overline{XX}$ measurement, with errors arising only from incoming errors prior to this $\overline{XX}$ step, an outcome of $\pm1$ indicates the presence or absence of $Z$-errors in the outgoing state. To make this measurement fault-tolerant, we add a flag qubit \cite{chao2018quantum} initialized to $\ket{0}$, as seen in Fig.~\ref{fig:XX_stabilizer}. This qubit, along with a pair of CNOTs and a $Z$ measurement, indicates whether a damping error has occurred in the ancillary qubit during the $\overline{XX}$ measurement.  In the absence of faults, the measurement of the flag qubit always yields the outcome $+1$.
	
In the presence of faults in the $\overline{XX}$ measurement, however, the state can be contaminated by additional types of errors. We consider only $\cF_a$ errors during the measurement procedure. $\cF_Z$ and $Z$ errors on the extended data qubits, from faulty \textsc{CNOT}s in the measurement circuit or in the waiting locations after the \textsc{CNOT}s, do not spread to the ancillary qubit and can be treated as incoming errors for the next measurement. The only errors that can cause problems are $\cF_Z$ or $Z$ errors on the ancillary and flag qubits, which can flip the $\overline{XX}$ measurement outcome. We explain how to deal with those errors in Sec.~\ref{sec:decodingz}; here, we focus only on the $\cF_a$ errors. 
	
There are various error scenarios that may occur during the $\overline{XX}$ measurement: 
	\begin{itemize}
		\item A damping error $\cF_a$ on one of the extended data qubits, after the damping syndrome extraction step but before the $\textsc{CNOT}$ on that qubit, is projected to an $X$ ($Y$) error in the outgoing state when the measurement outcome is $+1$($-1$).
		\item A damping error $\cF_a$ on one of the extended data qubits, due to the \textsc{CNOT} in the measurement circuit or from the idling time after the \textsc{CNOT}, causes a damping error in the outgoing state; these are in fact benign for the current measurement and are  detected in the subsequent parity checks.
		\item A damping error $\cF_a$ anywhere on the ancillary qubit propagates an $X$ error to the flag qubit and correlated $X$ errors to the extended data qubits in the outgoing state. Without the flag qubit, this can lead to logical errors when decoding, as it can be confused with the scenario that the extended data qubits are truly damped.

        \item A damping error $\cF_a$ on the flag qubit can propagate a $Z$-type error to the ancillary qubit, and/or flip the flag qubit, but it does not introduce any error to the extended data qubits. Two damping errors on both the flag and the ancillary qubits can lead to a random outcome in the measurement of the ancillary qubit, while flipping the outcome of the flag qubit to $+1$. This can cause a $Z$ error when decoding, but this does not pose a problem since it is an $O(p^2)$ event that leads to an $O(p^2)$ error.
	
\end{itemize}

In what follows, we explain how to perform the $\overline{XX}$ measurement in a fault-tolerant manner in the presence of the error scenarios above. For that, we need the concept of an \emph{error sequence} $e$ for each row of extended data qubits. $e$ specifies the error on each qubit in the outgoing state after the $\overline{XX}$ measurement: We write $X$ if the qubit is affected by an $X$-type error (that is, an $\cF_a$, $X$, or $Y$ error); we write $I$ if there are no errors or only $Z$-type errors.
For example, the error sequences on a pair of successive rows may look like $e_1\equiv X I I X X$ for row $1$ and $e_2\equiv I X I X I$ for row $2$.
For each pair of successive rows with error sequences $e_i$ and $e_{i+1}$, we let $\textsc{diff}(e_i,e_{i+1})$ be the number of columns in which $e_i$ and $e_{i+1}$ differ. For the example error sequences $e_1$ and $e_2$ above, $\textsc{diff}(e_1,e_2)=3$ as the two sequences differ in columns $1$, $2$, and $5$.
	
As we explain below, as long as $\textsc{diff}\leq t$ for every pair of successive rows, we can ensure fault-tolerant operation of the $\overline{XX}$ measurement. That $\textsc{diff}\leq t$ can be assured by a clever ordering of the \textsc{CNOT}s between the extended data qubits and the ancillary qubit, as shown in Fig.~\ref{fig:XX_stabilizer}. This statement is summarized in Lemma \ref{lemma:XXmeasurement} below; the proof is in App.~\ref{app:ProofThm1}.

\begin{lemma}\label{lemma:XXmeasurement}
Let $e_i$ and $e_{i+1}$ be the $X$-type error sequences for a pair of successive rows, caused by $\ell \leq t$ errors in the $\overline{XX}$-stabilizer measurement. It is possible to ensure $\textsc{diff}(e_i, e_{i+1}) \leq \ell$ by fixing the order in which the ancillary qubit is coupled to the extended data qubits, as given in Fig.~\ref{fig:XX_stabilizer}.
\end{lemma}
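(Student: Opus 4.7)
My plan is to enumerate all possible locations of the $\ell$ faults in the $\overline{XX}$-stabilizer circuit and show that each individual fault contributes at most $1$ column to $\textsc{diff}(e_i, e_{i+1})$. Once this per-fault bound is established, additivity immediately gives $\textsc{diff}(e_i, e_{i+1}) \leq \ell$, which is the claim.

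First I would dispose of the easy locations. A damping fault on a single data qubit---whether incoming, arising from one of the ancilla--data \textsc{CNOT}s in the $\overline{XX}$ circuit, or from an idle location during the measurement---affects at most one row in at most one column, so its contribution to $\textsc{diff}$ is at most $1$. A fault on the flag qubit either stays on the flag, propagates a $Z$-type error onto the ancilla, or flips the classical flag outcome; none of these converts into an $X$-type symbol in $e_i$ or $e_{i+1}$, so flag faults contribute $0$. Measurement faults on the ancilla or flag bits likewise only flip a classical outcome and contribute $0$ to $\textsc{diff}$.

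The crucial case is a damping fault on the ancillary qubit that occurs between two of its \textsc{CNOT}s with the extended data qubits. Such a fault collapses the ancilla, after which every subsequent \textsc{CNOT} with the ancilla as control propagates an $X$ to its data target. Here the ordering in Fig.~\ref{fig:XX_stabilizer} is the key ingredient: the ancilla visits the two rows in an interleaved zigzag (row $i$ column $1$, row $i+1$ column $1$, row $i$ column $2$, row $i+1$ column $2$, $\ldots$). The suffix of \textsc{CNOT}s after the fault therefore covers both rows of every column beyond some column $j$, together with at most one row of column $j$ itself. Consequently $e_i$ and $e_{i+1}$ can differ only in that single column, so a single ancilla fault contributes at most $1$ to $\textsc{diff}$. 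Summing the per-fault bounds over the $\ell \leq t$ faults then yields $\textsc{diff}(e_i, e_{i+1}) \leq \ell$.

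The step I expect to be subtlest is the bookkeeping of correlated errors, since a single faulty \textsc{CNOT} can deposit damping on both of its qubits simultaneously; I would have to verify that the data-side and ancilla-side pieces of such a two-qubit fault jointly contribute at most $1$ to $\textsc{diff}$. This works out because the extra data-side damping falls in the same column where the ancilla-side cascade begins, so the zigzag argument absorbs both into a single column's worth of mismatch. A similar consistency check is needed for a damping fault on the ancilla that occurs before the first flag \textsc{CNOT}, where the flag qubit must record the event to prevent decoding from misinterpreting the correlated data errors as independent damping; the remaining analysis then proceeds as above.
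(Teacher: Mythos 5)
Your proposal is correct and follows essentially the same route as the paper's proof: flag and measurement faults contribute nothing to $\textsc{diff}$, each data-qubit fault contributes at most one column, the zig-zag CNOT ordering confines the $X$-cascade from an ancilla damping to a mismatch in at most one column, and subadditivity of $\textsc{diff}$ under composition of error patterns gives the bound $\ell$. The paper packages this as $\textsc{diff}\leq\textsc{diff}_1+\textsc{diff}_2$ with $\textsc{diff}_1\leq 1$ from the ancilla and $\textsc{diff}_2\leq\ell-1$ from the data qubits, which is the same argument as your per-fault accounting.
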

	
Two points are worth noting about Lemma~\ref{lemma:XXmeasurement}. First, the errors occuring inside the measurement can be any combination of the error scenarios mentioned above. Second, Lemma~\ref{lemma:XXmeasurement} is only applicable if the errors on the two rows in question originate from the $\overline{XX}$ measurement in that subcircuit, and have not been propagated from other rows. Due to the way we execute subcircuits in an EC gadget [see Fig.~\ref{fig:ECBigPicture}(a) and Sec.~\ref{sec:decodingz} below], this condition is satisfied.
	
After performing the $\overline{XX}$ measurement, we treat each row as a $(2t+1)$ repetition code and use the parity measurements of Fig.~\ref{fig:paritymeas}(a) to obtain the syndrome sequence for each row. The syndrome sequence for a given row can, however, correspond to two complementary error sequences $e$ and $\overline{e}$, where $\overline{e}$ is just $e$ with the swap $I\leftrightarrow X$ for each qubit. For example, both $e=XXIII$ and $\overline{e}=IIXXX$ correspond to the same syndrome sequence $(+1,-1,+1,+1)$. 
Lemma~\ref{lemma:XXmeasurement} then serves as our guide to deciding between the two possibilities: Based on the syndrome sequences for each pair of successive rows, we choose the error sequences $e_1$ and $e_2$ associated with the two rows such that $\textsc{diff}(e_1, e_2) \leq t$.

Assuming for the moment that the parity measurements following the $\overline{XX}$ measurement are perfect (imperfect parity measurements are handled in the next section), let us examine how that $\overline{XX}$ measurement behaves under the error scenarios mentioned above; see App.~\ref{app:ProofThm1} for further details.
\begin{itemize}
	\item When the ancilla is faulty, it propagates $X$ errors to the data qubits such that the two error sequences satisfy $\textsc{diff}(e_1, e_2) =1$, due to the way the CNOTs are applied (see Fig.~\ref{fig:XX_stabilizer}).
	\item Any $s\leq t$ incoming damping errors in the extended data qubits propagate as $s\leq t$ $X$ errors. The valid syndrome sequence for the two rows, from the parity check units, is therefore the one that satisfies $\textsc{diff}(e_1, e_2) \leq t$.
	\item Similarly, any $s\leq t$ outgoing damping errors after the $\overline{XX}$ measurement is identified by choosing the syndrome sequence that satisfies $\textsc{diff}(e_1, e_2) \leq t$. 
\end{itemize}
Note that the last two statements above are straightforward to see since both the rows that have $s \leq t$ damping or $X$ errors are correctable by applying $X$ operators on $s \leq t$ qubits on both the rows, such that they naturally satisfy $\textsc{diff}(e_1, e_2) \leq t$. Finally, the case where one could have a combination of the first and one of the other two error scenarios is treated in App.~\ref{app:ProofThm1}. We have thus demonstrated that $t$ damping errors in the $\overline{XX}$ measurement will not lead to uncorrectable $X$ errors on more than $t$ damping errors in the outgoing state.

\subsubsection{Parity checks}
\label{sec:parity_checks}
We also need to deal with potential faults in the parity measurements [Fig.~\ref{fig:paritymeas}(a)] following the $\overline{XX}$ measurement. Faults in the parity measurements can cause two problems. First, they can introduce more errors to the extended data qubits. However, regardless of how those errors are introduced, as long as the total number of errors in the $\overline{XX}$ measurement and the parity-check steps is no more than $t$, the condition of Lemma~\ref{lemma:XXmeasurement} is still satisfied and the error sequences at the end of the parity check step still have the property stated in the lemma. 
		
Second, if new faults occur on the ancilla qubits that are used in the parity measurements, they can give incorrect syndromes and consequently erroneous inferences of error sequences. Again, we address this simply by repeating the parity measurements until we get the same syndrome sequence $t+1$ times. In the presence of no more than $t$ faults, there must be at least one round without any faults among the $t+1$ rounds in which we obtain the same syndrome sequence. By using the syndrome sequence that repeats $t+1$ times, we can confidently infer the correct error sequence. In the worst-case scenario, we may need to repeat up to $t(t+1)+1$ rounds. This takes into account the possibility of fresh errors occurring as we repeat, but we are limited to no more than $t$ errors in total, within fault-tolerance considerations. This simple repetition scheme, however, is not optimal; in App.~\ref{app:ft_syndrome}, we discuss various methods to reduce the number of repetitions required, thus improving the efficiency of the syndrome measurement process.

\subsubsection{Recovery and decoupling}\label{sec:recovery}
In the recovery step, we apply local $X$ operations to the extended set of data qubits, based on the error sequences inferred from the syndromes obtained in the parity-check step (see Sec.~\ref{sec:XXmeas} for the decoding algorithm). Finally, in the last step of the subcircuit, we decouple the coupling ancilla qubits from the original data qubits, using the same circuit as in the coupling step. The coupling ancilla qubits are then measured in the $Z$ basis and the measurement outcomes are recorded for the purpose of determining damped rows later.

\subsubsection{Fault tolerance of a subcircuit}

We now come back to address the fault tolerance of the subcircuit against $\cF_a$ errors: fault tolerance of the subcircuit is ensured by the fault tolerance of its individual components. Assume that there are $s$ incoming $\cF_a$ errors to the subcircuit (both rows), and then $r_1$, $\ldots$, $r_5$, and $r_6$ $\cF_a$ errors in the coupling, damping syndrome extraction, $\overline{XX}$ measurement, parity check, recovery, and decoupling steps, respectively. Also, assume that $s+\sum_{i=1}^{6}r_i \leq t$. We can leverage the fault-tolerance property at every step to show that there are no more than $r_4+r_5+r_6$ $\cF_a$ errors in the outgoing state of the subcircuit, thereby showing that the subcircuit is fault tolerant against $\cF_a$ errors. Specifically:
\begin{itemize}
    \item Since $s+r_1 + r_2\leq t$, the syndrome extraction step (Sec.~\ref{sec:damping}) ensures that its outgoing state has no more than $r_1 + r_2$ $\cF_a$ errors.
	\item Since $r_1 + r_2 + r_3 \leq t$, the $\overline{XX}$-stabilizer measurement step (Sec.~\ref{sec:XXmeas}) ensures that the $X$-type error sequences of its outgoing state have the property stated in Lemma~\ref{lemma:XXmeasurement}.
	\item This, together with the fact that $r_3 + r_4 \leq t$, ensures that correlated errors from the $\overline{XX}$-stabilizer measurement step are detected and its outgoing state differs from the detected state by no more than $r_4$ $\cF_a$ errors (Sec.~\ref{sec:parity_checks}). 
	\item In the recovery step, gates are applied transversally (Sec.~\ref{sec:recovery}). This ensures that there are no more than $(r_4 +r_5)$ $\cF_a$ at the end of this step.
	\item Finally, the decoupling step also applies transversal gates with respect to the data qubits (Sec.~\ref{sec:recovery}). This means there are no more than $(r_4+r_5+r_6)$ $\cF_a$ errors in the outgoing state of the subcircuit.
\end{itemize}

Furthermore, for the discussion on decoding $Z$-type errors ($Z$ and $\cF_z$) in the next section, it is important to note that a subcircuit does not spread $Z$-type errors badly. First, a $Z$-type error on any data qubit trivially passes through all steps except for the $\overline{XX}$ measurement, where it propagates to the measurement ancilla. However, the role of this ancilla is limited to the detection of the error and moreover, the propagated error does not propagate back to the data qubits. Second, a $Z$-type error on any coupling ancilla behaves in the same manner as on a data qubit, except in the decoupling step, where it can propagate back to the data qubits. However, it can only propagate to at most one data qubit. This is also true for $Z$-type errors on measurement ancillas of parity measurements. Overall, the total number of $Z$-type errors in the outgoing state of the subcircuit is thus limited by the number of $Z$-type errors inside the subcircuit.

We note that our construction guarantees that the output of the EC gadget has no more than $(r_4+r_5+r_6)$ errors, more stringent than required by the fault-tolerance condition where the output can have up to $\sum_{i=1}^6r_i$ errors [see property (P1) of App.~\ref{app:ftprops}], equal to the number of errors that originate in the EC gadget itself. The construction can thus be optimized to relax some of the error propagation and correction requirements, but we opted for the current modular structure where components are made individually fault-tolerant, for a simpler argument of fault tolerance of the full EC gadget.

\subsection{Decoding $Z$ errors}
\label{sec:decodingz}
	
As noted earlier, each $\overline{XX}$ measurement between a pair of rows in a single subcircuit outputs only one syndrome bit. To have sufficient information for decoding $Z$ errors, all the $\overline{XX}$ stabilizers between \emph{every} pair of consecutive rows need to be measured. One round of $Z$-syndrome measurement consists of $t$ subcircuits performed in two steps, thereby giving a $t$-bit syndrome sequence. In the first step, the subcircuits are performed between rows $1 \, \& \, 2$, $3 \, \& \, 4$, and so on, as shown in Fig.~\ref{fig:ECBigPicture}(a). In the second step, they are performed between rows $2 \, \& \, 3$, $4 \, \& \, 5$, and so on. Note that if there is an odd number of rows in the lattice, the last row stays idle in the first step, and the first row stays idle in the second step. If the number of rows is even, the first and the last rows stay idle in the second step.  
	
Due to faults in the $\overline{XX}$ measurements, the $t$-bit syndrome obtained after the first round of measurements is unreliable. We therefore perform many rounds of the $\overline{XX}$ measurements, using the same repetition scheme as in the parity-check step of a subcircuit. Specifically, we repeat until a syndrome sequence appears $t+1$ times and use it to decode the $Z$ error. 
	
In addition to the $\overline{XX}$ syndrome, we also need information about the damped/undamped rows from the damping syndrome extraction step to complete the decoding, as explained in App.~\ref{app:err_corr}. In the following, we explain how to get this information from the various syndrome bits obtained in the subcircuits. 

Consider a subcircuit and a row involved in that subcircuit. 
\begin{itemize}
	\item \emph{Damping syndrome extraction:} An outcome of $-1$ in any of the $ZZ$-stabilizer measurements or the individual $Z$ measurements done on the row clearly indicates a damping error in the incoming state to the subcircuit or a damping fault in the syndrome extraction unit. In either case, we label the row as damped. 
	\item \emph{Parity checks}: An outcome of $-1$ in any of the parity measurements done on the row indicates a damping fault somewhere in the damping syndrome extraction, the $\overline{XX}$ measurement, or the parity-check step. If the flag qubit in the $\overline{XX}$ measurement yields an outcome of $+1$, then the row is genuinely damped or there are at least two damping errors on both the ancillary and flag qubits in the $\overline{XX}$ measurement. Regardless, we label the row as damped in this case. On the other hand, if the flag qubit yields an outcome of $-1$, then it is possible that errors detected by the parity measurements are merely due to a damping error on the ancillary qubit in the $\overline{XX}$ measurement. In this case, we label the row as \emph{potentially damped} row.

    Without the flag qubit in the $\overline{XX}$ measurement, a potentially damped row might be incorrectly identified as a truly damped row. For instance, consider a $7\times 7$ Bacon-Shor code capable of correcting up to $6$ damping errors. The extraction of $Z$ syndrome can be ambiguous for the following two scenarios: (a) four damping errors $\cF_a$ each affecting one of the rows $1$, $2$, $3$, and $4$, along with an additional $Z$ error on row $5$, (b) two damping errors on the ancilla qubits measuring $\overline{X}_1\overline{X}_2$ and $\overline{X}_3\overline{X}_4$, together with two $Z$ errors on rows $6$ and $7$. The errors on the ancilla qubits can be arranged such that $X$ errors appear on the first four rows. Both scenarios can thus produce an identical syndrome of $(+1, +1, +1, +1, -1, +1)$ for $Z$ errors. For scenario (a), we need to apply $Z$s to the first five rows, whereas for scenario (b), we need to apply $Z$s to the last two rows; the flag qubit allows us to distinguish between the two.
		
	Note that there are cases where a coupling ancilla is damped and the subcircuit under consideration detects and corrects the $X$ part of the damping error. The ($I-Z$)-part, however, is transferred to the data qubits after the decoupling step and enters the next subcircuit. None of the parity measurements in the succeeding subcircuit will detect this error. However, we must be aware that a $Z$ error, if projected out by the $\overline{XX}$ measurement in the next subcircuit, comes from a first-order damping error. To take this into account, we also declare a row to be damped/potentially damped if the preceding subcircuit has labelled it as damped/potentially damped.

	\item \emph{Decoupling:} All the coupling ancillas are measured in the $Z$ basis after being decoupled from the data qubits. If any of them yields a $-1$ outcome, it indicates a damping fault after the parity-check step. The fault may be detected by the damping syndrome extraction in the next subcircuit if the fault is on a data qubit, or undetected by the next subcircuit if the fault is on a coupling ancilla (because only the $(I-Z)$-part remains, as explained above). So, a row in the current subcircuit is also labeled as damped if any of the coupling ancilla measurements in the previous subcircuit has outcome $-1$.  
\end{itemize} 
In any \textsc{EC} unit, a single row of data qubits is  a part of many subcircuits. We say that the row is damped/potentially damped if any of the subcircuits in the \textsc{EC} unit flags it as damped/potentially damped. Note that a potentially damped row can later be marked as a damped row if a condition for damped row is matched. However, once the status of a row is marked as damped, it will not be changed to potentially damped or undamped (for the current EC unit).

Once the $\overline{XX}$ syndromes and the information about damped/potentially damped/undamped rows are collected, we decode and correct for $Z$ errors using the following procedure. Recall from~\ref{sec:XXmeas} that there are a pair of (complementary) error sequences $e$ and $\bar{e}$ compatible with a given $\overline{XX}$ syndrome. Let $n_d, n_p, n_u$ denote the number of damped, potentially damped, and undamped rows, respectively. Let $z_d(e), z_p(e), z_u(e)$ denote the number of $Z$ errors in the damped, potentially damped, and undamped rows for the sequence $e$. Then, the number of $Z$ errors in the damped, potentially damped, and undamped rows for the sequence $\bar{e}$ are $z_d(\bar{e}) = n_d - z_d(e)$, $z_p(\bar{e}) = n_p - z_p(e)$, and $z_u(\bar{e}) = n_u - z_u(e)$.

Next, we determine the more probable error sequence by summing up the number of $Z$ errors with suitable weights. We assign a weight $w(e;n_d,n_p,n_u) = n_d + n_p + z_p(e) + 2z_u(e)$, to the error sequence $e$ with respect to the configuration of $n_d$ damped rows, $n_p$ potentially damped rows, $n_u$ undamped rows. This definition is motivated by the relative orders of the different errors, in terms of the noise strength $p$. Specifically, a damped row requires at least a first-order damping event to occur, thus $n_d$ damped rows implies an event of at least $O(p^{n_d})$. Similarly, $n_p$ potentially damped rows arise only when an event of order at least $O(p^{n_p})$ takes place. Finally, a $Z$ error found in a potentially damped row can result from a first-order event (for example, a true damping error) and thus, $z_p(e)$ is added to the weight. On the other hand, a $Z$ error appearing in an undamped row necessitates at least a second-order event (for example, a genuine $Z$ error) which leads to the inclusion of $2z_u(e)$ in the weight.

Now, between $e$ and $\bar{e}$, we choose the sequence with the lower weight as the output of the decoding. This is equivalent to comparing $z_p(e)+2z_u(e)$ and $z_p(\bar{e})+2z_u(\bar{e})$. If there is no potentially damped row (i.e. $z_p=0$), it reduces to comparing the number of $Z$ errors in the undamped rows, as what is done for the ideal $\textsc{EC}$ (see App.~\ref{app:err_corr}).

This completes our description of the fault-tolerant \textsc{EC} gadget. Further optimizations of the EC gadget are possible. In particular, we note that there are multiple places in the gadget where we need to repeat the measurements until we obtain $t+1$ identical repeats of the syndrome sequences. A simple counting gave a worst-case $t(t+1)+1$ repetitions of the requisite measurement. A more optimal approach gives a reduced requirement of no more than $\frac{1}{4}(t+2)^2+1$ repetitions, fewer than $t(t+1)+1$ for $t> 1$; see App.~\ref{app:ft_syndrome} for further details.

\subsection{Fault Tolerance of the EC Gadget\label{sec:FTSummary}}
We now pull together all the arguments that explain why the EC gadget satisfies the fault-tolerance property (P1) of App.~\ref{app:ftprops}. As mentioned at the end of Sec.~\ref{sec:subcircuit}, a subcircuit is fault tolerant against $\cF_a$ errors. Since an EC unit consists of multiple subcircuits, it is also fault tolerant against $\cF_a$ errors. 

In addition, a subcircuit contains the spread of $Z$ errors. Hence, at the end of an EC unit, the outgoing state has no more $Z$ errors than the number of errors entering and occurring inside the unit. To be fault tolerant against $Z$ errors, however, we need to be sure that decoding the $Z$ errors does not lead to a logical error. This amounts to having reliable information of the $Z$-type syndrome and the damped rows. The $Z$-type syndrome is reliable due to the repeat-until-success scheme we adopt (see App.~\ref{app:ft_syndrome}) and all damped rows are flagged according to the protocol described in the last section.

\section{Logical gadgets}\label{sec:czgadget}

To complete our discussion of fault-tolerant computation with Bacon-Shor codes, we describe how to construct a universal set of logical gadgets fault tolerant against amplitude-damping noise. Our construction of the single-qubit logical gadgets, as well as the preparation and measurement gadgets, follows the prescription of our earlier work \cite{jayashankar2022achieving} for the $4$-qubit code (the smallest Bacon-Shor code); see Apps.~\ref{sec:ftgadgets} and \ref{sec:universal} for a self-contained description. We hence confine our discussion here to the logical \textsc{CZ} gadget, as it presents some interesting new features. We note that one can also construct a $\textsc{CCZ}$ gadget, adding to the variety of multi-logical-qubit gates; see App.~\ref{sec:ccz}.

The logical \textsc{CZ} gate for the Bacon-Shor codes can be realized transversally by performing physical \textsc{CZ} gates between qubits in each row of one code block and those in each column of the other code block. A specific implementation---there are many that work---is to have the physical \textsc{CZ} gate between qubit $(i,j)$ of one code block and qubit $(j,i)$ of another code block, as illustrated in Fig.~\ref{fig:CZ2} for the $3\times 3$ code. That this works as the logical \textsc{CZ} gate can be understood by noting that the logical states $|0\rangle_L$ and $|1\rangle_L$ can be written as
\begin{equation}\label{eq:01codewords}
	{\left.\begin{array}{c}|0\rangle_L\\|1\rangle_L\end{array}\right\}}={\left(\tfrac{1}{\sqrt 2}\right)}^{n+1}\!\!\!\!\!\!\!\sum_{x\in\{0_r,1_r\}^n}{\left[1\pm (-1)^{|x|}\right]}|x\rangle,
\end{equation}
where $x$ is an $n$-bit string, $|x|$ counts the number of $1$s in $x$ (the number of $1_\textrm{row}$s in $x$), and $|0(1)_r\rangle\equiv |0(1)\rangle_\textrm{row}\equiv |0(1)\rangle^{\otimes n}$ from Eq.~\eqref{eq:codewords}. The individual bits of $x$ thus go over the different rows of qubits of the Bacon-Shor lattice, specifying if each row of qubits is in the $|0\rangle_\textrm{row}$ or the $|1\rangle_\textrm{row}$ state. Observe that $|0\rangle_L$ is a sum of terms, each of which has an even number of rows in $|1\rangle_\textrm{row}$, i.e., $|x|$ is even; $|1\rangle_L$ is composed only of terms with odd $|x|$. Since the logical $Z$ is equivalent to applying $Z$ to any column, if the state is $|0(1)\rangle_L$, an even(odd) number of columns would have had a logical $Z$ applied to them, amounting to an overall identity(logical $Z$) operation, thus accomplishing the logical \textsc{CZ} gate.

\begin{figure}[t]
	\centering
	\includegraphics[trim=75mm 80mm 140mm 10mm, clip, scale=0.4]{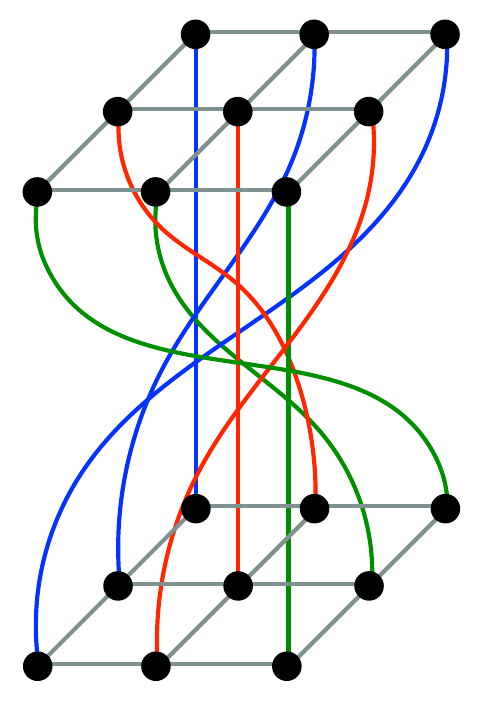}
	\caption{Transversal implementation of the encoded \textsc{CZ} gate for $3\times3$ Bacon-Shor code. The physical \textsc{CZ} gates are between the pairs of qubits, one from the top code block, the other from the bottom code block, connected by a colored (red/blue/green) line. The colors mark the qubits in the same row/column for the top/bottom code blocks.}
	\label{fig:CZ2}
\end{figure}

It is not apparent that this construction is fault tolerant. As emphasized in Ref.~\cite{jayashankar2022achieving}, transversality in such a noise-adapted situation does not automatically mean fault tolerance, as the transversal gates can still convert a correctable error into an uncorrectable one. Now, an $\cF_z$ or a $Z$ error passes through the logical \textsc{CZ} without change and the transversality does assure fault tolerance. An $\cF_a$ error, however, in the incoming state of, say, the top code block will propagate a $Z$ error to the bottom block under a transversal \textsc{CZ} gate. An \textsc{EC} gadget following the bottom block may not, by itself, correctly decode the $Z$ errors as our \textsc{EC} construction will incorrectly treat these first-order propagated $Z$ errors as the second-order $Z$ errors when trying to decode the detected errors. However, this problem can be resolved by incorporating the information about the damped rows in the first block. Specifically, when decoding $Z$ errors for an output block of a logical \textsc{CZ}, the set of rows regarded as damped (used in the maximum-likelihood decoding of the $Z$-error syndromes described in Sec.~\ref{sec:decodingz}) should include also those rows with qubits connected to damped qubits in the other block (as indicated by the damping syndrome extraction in the \textsc{EC} gadget of the other block). With this joint decoding of the two code blocks, we achieve fault tolerance of the logical \textsc{CZ} gadget.

\section{Logical infidelity and pseudothreshold for quantum memory}\label{sec:threshold}

To demonstrate the fault-tolerance of our EC gadget construction, here we estimate the logical infidelity for different code sizes $L=t+1$ and compute the pseudothreshold levels for the basic task of quantum memory.

Following the standard fault-tolerance analysis of Ref.~\cite{aliferis}, the pseudothreshold can be estimated by considering an ``extended memory gadget" comprising an idling (that is, memory) step sandwiched between two \textsc{EC} units. 
Specifically, the pseudothreshold is that critical value of the noise strength, $p=p_{\rm th}$, below which the  fault-tolerant quantum memory outperforms the unencoded (i.e., not error-corrected) version. In particular, we compare the infidelity with and without error correction. For amplitude-damping noise with strength $p$, the fidelity of the unencoded qubit averaged over all input states falls as $1 -\alpha p$ for small $p$, where $\alpha=1/3$. We write $1-F(p)$ for the infidelity between the encoded output and the input states, with $F(p)$ a function of $p$. Then, the pseudothreshold $p_\mathrm{th}$ is determined by the condition,
\begin{equation} 
	1-F (p_\mathrm{th}) = \alpha p_\mathrm{th}.
\end{equation}

Often, the pseudothreshold for a fault-tolerant quantum computing scheme is computed by exact counting or full computer simulation, for a close estimation. As we are after only a simple gauge of the efficacy of our scheme, we opt instead for an approximate bound by counting the number of ways $t+1$ errors can occur in the extended memory gadget. The $t+1$-error term gives the dominant contributor to the infidelity. Specifically, for a given noise strength $p$, we estimate, 
\begin{equation}\label{eq:thbound}
	1-F(p) < \left[{2N+(t+1)^2 \choose t+1} - {N \choose t+1}\right]p^{t+1}.
\end{equation}
Here, $N$ is the total number of circuit locations in our \textsc{EC} gadget, and the second term in the brackets excludes the ways of having $t+1$ errors in the first \textsc{EC} gadget in the extended memory gadget; those are counted already in the previous extended gadget. 

Note that we have assumed that $p$ is small enough such that larger-than-$t$-error terms give a contribution to the infidelity that is smaller than the overcounting we have here from counting all $(t+1)$-error locations, including those benign to the infidelity (usually quite a significant fraction). 
We need also then an estimate of $N$. We can do a careful counting using our EC-gadget construction (see App.~\ref{sec:ft_pth} for the details), and we find $N=O(t^6)$, a rather rapid scaling with $t$ due in part to the necessity to repeat the parity measurements for fault tolerance. The dominant part in a subcircuit is repeating the parity measurements, which take, in the worst case, $O(t^2)$ rounds with $O(t)$ location for each round. This gives $O(t^3)$ locations for a subcircuit. For the whole EC, there are $O(t)$ subcircuits. In the worst case, $\overline{XX}$ measurements are repeated $O(t^2)$ rounds for reliable decoding $Z$ errors. This gives the $O(t^6)$ number of locations in total.

\begin{figure}[t]
	\centering
	\includegraphics[width=\columnwidth]{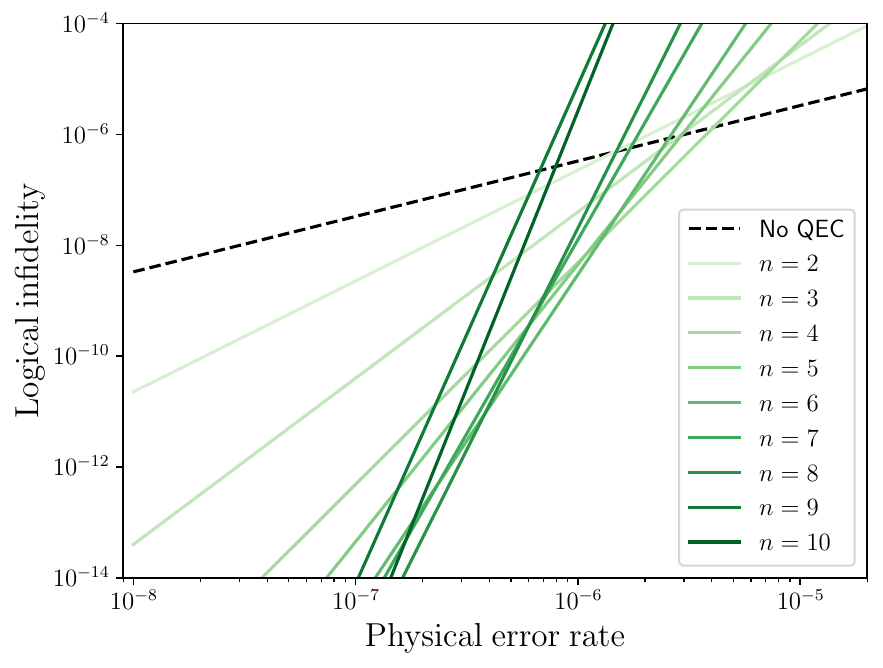}
	\caption{Upper bounds on the infidelity of the output state after the extended memory gadget, plotted as a function of the amplitude-damping parameter $p$ for various lattice sizes $n=t+1$. The dotted line is the average infidelity for the unencoded state, while the green lines of varying shades are for the different lattice sizes $n=2$ to $n=10$, from light to dark green.} 
	\label{fig:thres}
\end{figure}

Fig.~\ref{fig:thres} plots the upper bound on the infidelity for different values of the damping parameter $p$ and for different Bacon-Shor code sizes $n$ (i.e., $n\times n$ lattice, for $n=t+1$). We show also the infidelity without error correction (dotted line), averaged over all pure states. We can then read off lower bounds on the pseudothresholds, i.e., the values of the damping parameter $p$ below which, for a given $n\times n$ code, error correction gives improved infidelity values over the unencoded situation. These pseudothreshold values are given in Table \ref{tab:pseudothres}. We note that, for each damping strength, there is an optimal lattice size that gives minimal infidelity. This is similar to what was observed previously for Bacon-Shor codes for correcting Pauli errors \cite{napp2012optimal}. This suggests that there is no non-vanishing accuracy threshold value for the damping strength below which increasing the code size always lowers the infidelity.

We note that our infidelity estimates, and hence the pseudothreshold values, here are likely far too pessimistic, due to our simple counting approach. For example, in our earlier work on the $n=2$ case \cite{jayashankar2022achieving}, a more careful estimate gave a pseudothreshold of $1.5\times10^{-4}$, compared to the corresponding $\sim 10^{-6}$ value in Table~\ref{tab:pseudothres}. In particular, we count in Eq.~\eqref{eq:thbound} \emph{all} ways of having $t+1$ errors, not just those that do eventually lead to an uncorrectable situation. As is commonly observed in such analyses, a significant fraction of $>t$-error situations are in fact benign, i.e., they give only correctable errors; this is certainly the case in our earlier $n=2$ construction \cite{jayashankar2022achieving}. 

For our current general-$n$ construction, we note that not all fault locations from the leading \textsc{EC} in the extended memory gadget will combine with fault locations in the trailing \textsc{EC} to form malignant sets of faults that do cause uncorrectable errors. Rather, only faults at locations near the end of the leading EC can combine with faults in the trailing EC to cause logical errors; most combinations of faults in the first \textsc{EC} can in fact be corrected. This is true even for subcircuits while considering only $\cF_a$ errors---most of the combinations of $\cF_a$ errors in a subcircuit can be eliminated by the subcircuit itself; only a small portion of faults occurring at the end of a subcircuit can propagate errors to combine with faults from subsequent subcircuits. In addition, during idle time, a qubit cannot be damped more than once, so we should exclude sets with multiple dampings on the same qubit. All these benign situations add to our upper bound on the infidelity in Eq.~\eqref{eq:thbound}, but a more careful counting can exclude them to give a tighter upper infidelity bound and improved pseudothresholds. 

Nevertheless, our goal here is simply to demonstrate the existence of nonvanishing pseudothresholds for our construction, and, more importantly, to show that the infidelity decreases as $n$ increases below the pseudothresholds, as must be the case for fault-tolerant operation.

\begin{table}
	\begin{tabular}{c@{\hspace{0.5cm}}c}
		\hline\hline
		$n$ \ & \ $p_{\text{th}}$ \\
		\hline
		2 & $1.46 \times 10^{-6}$ \\
		3 & $2.88 \times 10^{-6}$ \\
		4 & $4.08 \times 10^{-6}$ \\
		5 & $2.93 \times 10^{-6}$ \\
		6 & $2.57 \times 10^{-6}$ \\
		7 & $1.72 \times 10^{-6}$ \\
		8 & $1.49 \times 10^{-6}$ \\
		9 & $6.73 \times 10^{-7}$ \\
		10 & $7.94 \times 10^{-7}$ \\
		\hline\hline
	\end{tabular}
	\caption{\label{tab:pseudothres} Lower bounds on the pseudothreshold for lattice sizes up to $n=10$.}
\end{table}

\section{Conclusion}\label{sec:conclude}

We have shown how to construct an error-correction gadget as well as a set of universal logical operations based on the Bacon-Shor family of codes, all fault-tolerant against amplitude-damping noise on individual physical components. This completes the work started in Ref.~\cite{jayashankar2022achieving}, giving a fully fault-tolerant approach for quantum computation capable of correcting amplitude-damping errors on multiple qubits. Our work demonstrates the possibility of fault-tolerant construction starting only from the noise description, without having to adapt the scheme from a more noise-agnostic protocol.

Due to the adaptation to a specific noise channel, rather than having a full error basis, guaranteeing fault tolerance here is not just about controlling the spread of errors, but also controlling the \emph{types} of errors that emerge from the gates---they have to remain correctable by our noise-adapted code. As noted already in our earlier work, this is the basic reason why transversal constructions of operations, a cornerstone of general-purpose FT schemes, do not automatically assure fault tolerance in noise-adapted situations. Transversality guarantees a low spread of errors, but the types of errors that emerge are entirely determined by the gates used in the gadget construction and how the errors propagate through them. 

In our earlier work, we dealt with this issue by momentarily extending our four-qubit code to an eight-qubit one that could deal with additional varieties of errors beyond amplitude-damping ones, before switching back to the original four-qubit code. Here, we extend that approach to general Bacon-Shor codes, with the use of the coupling ancillas and the $(2t+1)$-qubit repetition code in the EC gadget. In addition, we present a curious solution to this problem in the CZ gadget -- we deal with the originally uncorrectable extra $Z$ error, propagated to the second code block due to an amplitude-damping error in the first code block, by a joint decoding of the two code blocks to recognize this propagated $Z$ error. Such a joint decoding to extend the fault-tolerance capabilities, requiring no joint quantum measurements but only pooling of the classical syndrome information, is likely useful in other contexts.

A different approach to achieving fault tolerance in such noise-adapted contexts would be to employ only gates that are ``noise-structure-preserving", so that incoming errors arising from the noise channel remain among the correctable error types after propagation through the gate. Such noise-structure-preserving gates have been identified for Pauli errors in the past~\cite{aliferis_biased, bias_preserving2020}. For amplitude-damping noise, however, we have not been able to find a set of noise-structure-preserving physical gates with sufficient variety to construct a universal set of logical operations.

Of course, our gadget constructions here are far from optimal. The simplistic repetition of the syndrome checks, reminiscent of the repeated syndrome rounds of surface-code error correction, is but an easy-to-analyze route to ward against errors in the syndrome measurements; App.~\ref{app:ft_syndrome} offer some optimizations but certainly more can be done. Our component-by-component approach to assuring the fault tolerance of the EC gadget gave a clear path to showing the fault-tolerance properties, but led to fewer output errors than required by FT rules; a more holistic construction no doubt can lower resource demands and thereby improve the pseudothreshold numbers and truly realise the potential of noise adaptation. 

Nevertheless, we have shown the possibility of a fully fault-tolerant quantum computation scheme adapted for amplitude-damping noise. We hope our work will motivate exploration into schemes adapted for other types of noise, including ones that are directly relevant to current quantum computing experiments.

\section*{Acknowledgments}
This work is supported in part by the Ministry of Education, Singapore (through Grant No. MOE2018-T2-2-142). HK Ng acknowledges partial support from the National Research Foundation, Singapore and A*STAR under its CQT Bridging Grant. P.M. is supported in part by a grant from the Mphasis F1 Foundation to the Centre for Quantum Information, Communication, and Computing (CQuICC).

\bibliography{FTScalingUp}

\appendix
\section{Correcting amplitude-damping noise with the Bacon-Shor Code}\label{app:err_corr}

Here, we give further details of the ideal EC procedure of Ref.~\cite{piedrafita2017reliable} to correct amplitude-damping noise using the Bacon-Shor code and illustrate its use with the $3\times 3$ example.

\subsection{Ideal EC procedure}\label{sec:ideal}
The ideal EC procedure constructed in Ref.~\cite{piedrafita2017reliable} involves three steps performed in sequence: a two-part syndrome extraction for damping errors, a $Z$-error syndrome extraction, and a recovery step. We explain each step in detail below.

\begin{enumerate}
\item \textit{Damping error syndrome extraction}\\[-4ex]
\begin{enumerate}[{(a)}]
		\item Measure the $Z_{i,j}Z_{i,j+1}$ stabilizers, for all $i$ (row) and $j$ (column). For each measurement, the two qubits involved are said to have even(odd) parity if the outcome is $+1(-1)$. In each row, if at least one parity outcome is odd, we declare that the row is damped; otherwise, the row is undamped. 
		\item For each damped row, we measure all the qubits in the row in the $Z$ basis to determine the positions of the damping errors. If the measurement outcome is $+1$, the qubit has a damping error; if the outcome is $-1$, the qubit is not damped. This can be seen by considering the effect of a damping error $E$ on an arbitrary encoded state: $E$ has effect only if the qubit is originally in the $\ket 1$ state; the damped row is disentangled from the rest of the rows, regardless of the original encoded state, and the damped qubit ends up in the state $\ket{0}$.  
\end{enumerate}
	
We note that the above steps can identify the positions of up to $t=n-1$ damping errors in each row \footnote{Note that the code can actually tolerate up to $n-1$ damping errors in each row, for any $n-1$ rows, i.e., maximally $(n-1)^2$ damping errors. However, it can only tolerate $n-1$ damping errors in \emph{arbitrary} locations, as there are cases where $n$ damping errors cause a logical error, e.g., $n$ errors in one row or one column.)}. At the end of this syndrome extraction step, we are ready to correct for the $X$-part of the damping error $E=X(I-Z)$ by applying $X$ to the damping locations, leaving $(I-Z)$-part for the subsequent detection. Alternatively, we can delay the application of these local $X$s until the end of the EC procedure. This will not affect the detection of $ Z$ errors in the next part.
	
\item \textit{$Z$-error syndrome extraction}\\[0.5ex]
Measure all the $\overline{X}_i\overline{X}_{i+1}$ stabilizers and obtain the syndrome bits. An $\overline{XX}$ measurement on undamped rows has the effect of detecting $Z$ errors. 
If one of the rows is damped, as detected in Step 1(a), the $\overline{XX}$ measurement either removes that damping error and returns the correct code state (if outcome $+1$ is obtained), or, it projects the qubits onto the correct code state up to a $Z$ error on one of the qubits in the damped row (if outcome $-1$ is obtained). The measurement of $\overline{XX}$ stabilizers can result in different post-measurement errors depending on whether any of the rows involved were damped prior to the measurement. Table~\ref{tab:XXmeasFa} gives the scenarios that can occur for the two rows before the measurement, along with the measurement outcomes and the subsequent errors.
	
\begin{table}
		\centering
		\begin{tabular}{c|c|c|c}
			Damped rows & Initial error & Outcome  & Final error  \\
			\hline
			None & $\overline{I}_1\overline{I}_2$ & $+1$ & $\overline{I}_1\overline{I}_2$ \\
			None & $\overline{Z}_1\overline{I}_2$ & $-1$ & $\overline{Z}_1\overline{I}_2$ \\
			1 & $(\overline{I}_1-\overline{Z}_1)\overline{I}_2$ & $+1$ & $\overline{I}_1\overline{I}_2$ \\
			1 & $(\overline{I}_1-\overline{Z}_1)\overline{I}_2$ & $-1$ & $\overline{Z}_1\overline{I}_2$ \\
			1,2 & $(\overline{I}_1-\overline{Z}_1)(\overline{I}_2-\overline{Z}_2)$ & $+1$ & $\overline{I}_1\overline{I}_2$ or $\overline{Z}_1\overline{Z}_2$ \\
			1,2& $(\overline{I}_1-\overline{Z}_1)(\overline{I}_2-\overline{Z}_2)$ & $-1$ & $\overline{Z}_1\overline{I}_2$ or $\overline{I}_1\overline{Z}_2$ \\
		\end{tabular}
		\caption{List of possible syndromes and the projected errors after measuring an $\overline{XX}$ stabilizer in the presence of damping errors. The two rows involved are denoted as $1$ and $2$. $\overline{I}_1$ ($\overline{I}_2$) denotes all identities in row $1$(row $2$), $\overline{Z}_1$($\overline{Z}_2$) denotes a single $Z$ operator on one of the qubits in row 1(row 2).	It is assumed that the $X$-part of the damping error is corrected, leaving only the $(I-Z)$-part.}
		\label{tab:XXmeasFa}
\end{table}

Measuring the $\overline{XX}$ stabilizers also has the effect of annihilating the off-diagonal $\cF_z$ errors. To see this, note that $\cF_z(\cdot)\equiv \tfrac{1}{2}[(\cdot)Z + Z(\cdot)]$ and 
\begin{equation}
	(1+\overline{X}_i\overline{X}_{i+1}) (Z\rho + \rho Z) (1+\overline{X}_i\overline{X}_{i+1}) = 0
\end{equation}
where $\rho$ is a state in the code space and the $\cF_z$ error is on row $i$ or $i+1$. However, an even number of $\cF_{z}$ errors may not be eliminated by an $\overline{XX}$ measurement. For example, if there are two $\cF_z$'s on a row, it has the effect of a single $Z$ error, since
\begin{equation}
	\cF_{z} \otimes \cF_{z}(\cdot)=\frac{1}{2}[(\cdot) + Z(\cdot)Z].
\end{equation}
The double $\cF_z$ errors are not annihilated but contribute to the probability of measuring $Z$ errors. We need to consider two cases: (i) even numbers of $\cF_{z}$'s on both the rows and (ii) odd numbers of $\cF_{z}$'s on both the rows. In case (i), the post-measurement errors are $Z$ errors and would be corrected as if they originated from genuine $Z$ errors. However, in case (ii), the post-measurement errors still have the off-diagonal form (see Table~\ref{tab:XXmeasFz}). Nevertheless, these errors are eliminated after considering all the $\overline{XX}$ stabilizer measurements. The reason is that there is at least one row with no $\cF_{z}$ error, given that there are no more than $t=n-1$ damping errors, and the measurement involving this row cannot result in errors of the off-diagonal form. 	
	
\item \textit{Recovery}\\[0.5ex] 
As mentioned earlier, the $X$ part of the damping error can be corrected using the information about the damping locations obtained from the damping-error syndrome extraction step. Meanwhile, decoding $Z$ errors involves both the syndromes obtained from the $\overline{XX}$ measurements and the information about the damped rows obtained from the syndrome extraction unit. As can be seen from Table~\ref{tab:XXmeasFa}, $Z$ errors can be projected out from damping errors, but those are first-order errors and should be distinguished from genuine second-order $Z$ errors. 
	
To understand how we can decode the $Z$ error, it is important to note that the measurement of $\overline{X}_i\overline{X}_{i+1}$ fixes the parity of the $Z$ error for the $i$-th and $(i+1)$-th rows. If $\overline{X}_i\overline{X}_{i+1} = +1$, it means that both rows either have no error or both have $Z$ errors. Conversely, if $\overline{X}_i\overline{X}_{i+1} = -1$, only one of them has a $Z$ error. Starting from the first row, the $\overline{XX}$ syndromes will determine the errors for all the other rows. Since there are two possibilities for the $Z$ error in the first row---either no error or one $Z$ error---there are two bit strings compatible with the obtained syndromes, denoted as $f$ and $\neg f$. Since each damped row carries a weight of at least $p$, the determining factor for which sequence is more likely is actually the number of undamped rows having a $Z$ error. Let $f'$ be the sequence (between $f$ and $\neg f$) with a smaller number of $1$s \emph{among the undamped rows}. We correct $Z$ errors by applying a $Z$ operation to the $i$-th qubit if $f_{i}'=1$. The specific qubit within a row to which the $Z$ operation is applied does not matter because they differ only by a stabilizer operation.   
\end{enumerate} 

\begin{table}[t]
	\centering
	\begin{tabular}{c|c|c|c|c}
		Case & Row 1 & Row 2 & Outcome & Final error  \\
		\hline
		1 & Even & Even & $+1$ & $\overline{I}_1\overline{I}_2$ or $\overline{Z}_1\overline{Z}_2$ \\
		2 & Even & Even & $-1$ & $\overline{Z}_1\overline{I}_2$ or $\overline{I}_1\overline{Z}_2$ \\
		3 & Odd & Even & $\pm1$ & Eliminated \\
		4 & Even & Odd & $\pm1$ & Eliminated \\
		5 & Odd & Odd & $+1$ & $\overline{Z}_1\overline{Z}_2(\cdot)+(\cdot)\overline{Z}_1\overline{Z}_2$ \\
		6 & Odd & Odd & $-1$ & $\overline{Z}_1(\cdot)\overline{Z}_2+\overline{Z}_2(\cdot)\overline{Z}_1$
		
	\end{tabular}
	\caption{List of possible syndromes and the projected errors after measuring an $\overline{XX}$ stabilizer in the presence of $\cF_z$ errors. The two rows involved are denoted as $1$ and $2$. The table shows various scenarios: even number of $\cF_z$s in both rows (cases $1,2$), even number in one row and odd number in the other row (cases $3,4$), and odd number in both rows (cases $5,6$). $\overline{I}_1$($\overline{I}_2$) denotes no $Z$ error in row $1$($2$), while $\overline{Z}_1$($\overline{Z}_2$) denotes one $Z$ error on one of the qubits in row 1(2).}
	\label{tab:XXmeasFz}
\end{table}

\subsection{Example: The $3\times 3$ Bacon-Shor Code}
\label{app:BS3}
To help the reader better understand the ideal EC procedure describe above, we explore the example of the $3\times 3$ Bacon-Shor code, capable of protecting against $t=2$ amplitude-damping errors. We label the qubits in the $3\times 3$ lattice as $q_{ij}$, for $i,j=1,2,3$, with $i(j)$ as the row(column) index. The set of stabilizer generators are
\begin{align}
	\cS = \{&Z_{11}Z_{12}, Z_{12}Z_{13}; Z_{21}Z_{22}, Z_{22}Z_{23}; Z_{31}Z_{32}, Z_{32}Z_{33}; \nonumber\\
	&\overline{X}_1\overline{X}_2, \overline{X}_2\overline{X}_3\},
\end{align}
and we have the logical states written explicitly as
\begin{align}
	\ket{0}_L &= \tfrac{1}{2} (\ket{0_r0_r0_r}+\ket{0_r1_r1_r}+\ket{1_r0_r1_r}+\ket{1_r1_r0_r})\nonumber\\
	\ket{1}_L &= \tfrac{1}{2} (\ket{1_r1_r1_r}+\ket{1_r0_r0_r}+\ket{0_r1_r0_r}+\ket{0_r0_r1_r}),
\end{align}
where $\ket{0}_r \equiv \ket{000}$ and $\ket{1}_r \equiv \ket{111}$ denote the $3$-qubit states for each row, and $\ket{\cdot_r\cdot_r\cdot_r}$ are for the three rows. The $\ket{\pm}_L$ states can be written as
\begin{equation}
	\ket{\pm}_L = \frac{1}{2\sqrt{2}} (\ket{0}_r \pm \ket{1}_r)^{\otimes 3}.
\end{equation}

We now describe the three steps of the ideal EC procedure for this $3\times3$ example.

\smallskip
\noindent(1) \underline{Damping-error syndrome extraction}.\\[1ex]
We measure the six parity check operators (two per row) $\{Z_{11}Z_{12}, Z_{12}Z_{13}; Z_{21}Z_{22}, Z_{22}Z_{23}; Z_{31}Z_{32}, Z_{32}Z_{33}\}$. For each row, if at least one of the measurement outcomes is $-1$, we declare that the row is damped and then measure all three qubits in that row in the $Z$ basis to determine the locations of damping errors. The row is otherwise undamped. 

As an example, consider the case where there is one damping error on $q_{11}$ versus the case with damping errors on both $q_{12}$ and $q_{13}$. In both cases, $(Z_{11}Z_{12}, Z_{12}Z_{13}) = (+1, -1)$, but in the first case, $(Z_{11}, Z_{12}, Z_{13})=(-1,+1,+1)$ while in the second case $(Z_{11}, Z_{12}, Z_{13})=(+1,-1,-1)$. Measuring $Z$s on the individual qubits on the damped row thus locates the damping errors.

Once the damping locations have been identified, we are ready to correct for the $X$-part of the damping error $E=X(I-Z)$ by applying $X$ to the damping locations, leaving the $(I-Z)$-part for the subsequent steps. Alternatively, we can delay the application of this local $ X$s until the end of the EC procedure. This will not affect the detection of $ Z$ errors in the next step. 

\smallskip
\noindent(2) \underline{$Z$-error syndrome extraction}.\\[1ex]
We next measure the two check operators $\{\overline{X}_1\overline{X}_2, \overline{X}_2\overline{X}_3\}$. The outcome $\overline{X}_1\overline{X}_2 = +1 (-1)$ means that the parity, in the $\overline{X}$ basis, of the first and second rows are the same (different). Table ~\ref{tab:XXsyndrome} shows the different possible syndrome outcomes for $\overline{X}_1\overline{X}_2$ and the corresponding errors. The same holds for $\overline{X}_2\overline{X}_3$.

\begin{table}[h]
	\centering
	\begin{tabular}{c|c|c|c}
		Initial error & Example & $X$ Syn & Projected error  \\
		\hline\hline
		No damped row & $\overline{I_1}\overline{I_2}$ & $+1$ & $\overline{I_1}\overline{I_2}$ \\
		$Z$ error & $\overline{Z_1}\overline{I_2}$ & $-1$ & $\overline{Z_1}\overline{I_2}$ \\
		$1$ damped row & $(\overline{I_1}-\overline{Z_1})\overline{I_2}$ & $+1$ & $\overline{I_1}\overline{I_2}$ \\
		$1$ damped row & $(\overline{I_1}-\overline{Z_1})\overline{I_2}$ & $-1$ & $\overline{Z_1}\overline{I_2}$ \\
		$2$ damped rows & $(\overline{I_1}-\overline{Z_1})(\overline{I_2}-\overline{Z_2})$ & $+1$ & $\overline{I_1}\overline{I_2}$ or  $\overline{Z_1}\overline{Z_2}$\\
		$2$ damped rows & $(\overline{I_1}-\overline{Z_1})(\overline{I_2}-\overline{Z_2})$ & $-1$ & $\overline{Z_1}\overline{I_2}$ or  $\overline{I_1}\overline{Z_2}$\\
	\end{tabular}
	\caption{List of possible outcomes and the corresponding errors for the measurement of $\overline{X}_1\overline{X}_2$ in various scenarios. $\overline{Z}_i$ denotes a $Z$ error on the $i$-th row, regardless of the exact position of the error, because they are equivalent up to a stabilizer operator. }
	\label{tab:XXsyndrome}
\end{table}

In the case of a damping error $\cF_a$, the measurements project the state to the code space or to error spaces with different $Z$ errors. In the case of an off-diagonal $\cF_z$, we can consider the following examples to highlight the discussion of the previous section. 
\begin{itemize}
	\item Odd number of $\cF_z$s on one row: $\cF_z$ on $q_{11}$ introduces the term $Z_{11}(\cdot) + (\cdot)Z_{11}$, eliminated by $\overline{X}_1\overline{X}_2$. 
	\item Even number of $\cF_z$s on one row: Two $\cF_z$s on $q_{11}$ and $q_{12}$ introduce the term $Z_{11}Z_{12}(\cdot) + (\cdot)Z_{11}Z_{12} + Z_{11}(\cdot)Z_{12} + Z_{12}(\cdot)Z_{11} \propto \overline{I_1}(\cdot)\overline{I_1} + \overline{Z_1}(\cdot)\overline{Z_1}$. This is an incoherent mixture of $\overline{I}_1$ and $\overline{Z}_1$ and gets detected as a $Z$ error. 
	\item  Even number of $\cF_z$s on two rows: One $\cF_z$ on $q_{11}$ and another $\cF_z$ on $q_{21}$ introduce the term $Z_{11}Z_{21}(\cdot) + (\cdot)Z_{11}Z_{21} + Z_{11}(\cdot)Z_{21} + Z_{21}(\cdot)Z_{11}$. The measurement of $\overline{X}_1\overline{X}_2$ with the outcome $+1$ ($-1$) projects out the term $Z_{11}Z_{21}(\cdot) + (\cdot)Z_{11}Z_{21}$ ($Z_{11}(\cdot)Z_{21} + Z_{21}(\cdot)Z_{11}$). Now, there are off-diagonal terms remaining, but the measurement $\overline{X}_2\overline{X}_3$ eliminates these terms as it sees only one $\cF_z$ on the second row. 
\end{itemize}

\smallskip
\noindent(3) \underline{Decoding and recovery}.\\[1ex]
As mentioned earlier, the $X$-part of the damping error can be corrected using the information about damping locations from the damping-error syndrome extraction step. Meanwhile, decoding for $Z$ error involves both the syndrome obtained from the measurement of $\{\overline{X}_1\overline{X}_2, \overline{X}_2\overline{X}_3\}$ and the information about damped rows obtained from the damping-error syndrome extraction unit. 

To see why the latter is needed, we consider the following two scenarios, both being second-order events. Suppose in both scenarios, we obtain the $Z$-error syndrome $(\overline{X}_1\overline{X}_2, \overline{X}_2\overline{X}_3) = (+1,-1)$, which means the error is either $\overline{Z_1}\overline{Z_2}\overline{I_3}$ or $\overline{I_1}\overline{I_2}\overline{Z_3}$.
\begin{itemize}
	\item Scenario 1: There is one damping error on the first row (any qubit) and another on the second. From Table~\ref{tab:XXsyndrome}, we deduce that $\overline{Z_1}\overline{Z_2}\overline{I_3}$ is the error and there is no way $\overline{I_1}\overline{I_2}\overline{Z_3}$ is projected out in this case.
	\item Scenario 2: There is one second-order $Z$ error on the third row. In this case, the syndrome is always $(\overline{X}_1\overline{X}_2, \overline{X}_2\overline{X}_3) = (+1,-1)$ and the error is $\overline{I_1}\overline{I_2}\overline{Z_3}$. 
\end{itemize}
This example highlights the fact that the $Z$-error syndrome alone is insufficient to distinguish all the correctable errors; the information about the damped rows must also be included. The general rule is that, between two errors compatible with the $Z$-error syndrome, we choose the one with the fewer number of $Z$ errors on the undamped rows.

\section{Fault tolerance properties}\label{app:ftprops}

Here, we state the properties that the EC and logical gadgets must have for fault-tolerant operation against amplitude-damping noise. 

An amplitude-damping channel affecting $n$ qubits locally can be understood as $n$ amplitude-damping errors affecting the data qubits, with $n_1$ first-order $\cF_a$ or $\cF_z$ errors, and $n_2$ second-order $Z$ errors such that $n_1 + 2n_2 \leq n$.  We use the term \emph{fault} to refer to the scenario where noise afflicts a physical location within a gadget and \emph{error} to refer to the resulting change on the quantum state. 

For fault-tolerant operation against amplitude-damping noise, our gadgets must have the following properties (for more details, see Chapter 2 in Ref.~\cite{aliferis2007level}).
\begin{itemize}
	\item[(P1)] An EC gadget is fault tolerant against $t$ errors if, for an input with $s$ errors and an \textsc{EC} gadget with $r$ faults such that $s+r \leq t$, the output state differs from a codeword by at most $r$ errors, and ideally decoding the output state gives the same codeword as ideally decoding the input state. 
	\item[(P2)] A preparation gadget is fault tolerant against $t$ errors if, for $r \leq t$ faults in the gadget, the output differs from a codeword by at most $r$ errors, and ideally decoding the output state gives the same codeword as prepared by an ideal preparation. 
	\item[(P3)] A measurement gadget is fault tolerant against $t$ errors if, for an input with $s$ errors and a measurement gadget with $r$ faults such that $s+r \leq t$, ideally decoding the outcomes gives the same result as ideally decoding and measuring the input state.
	\item[(P4)] A $k$-qubit gate gadget is fault tolerant against $t$ errors if, for input blocks with $s_1, s_2, \dots, s_k$ errors and a gadget with $r$ faults such that $\sum_{i=1}^k s_i + r \leq t$, each output block has at most $s_1+s_2+r$ errors, and, jointly decoding the output blocks by combining their classical syndromes gives the same state as ideal decoding of the input blocks followed by the ideal $k$-qubit operation.  
\end{itemize} 

In the last property, jointly decoding the output blocks implies that the syndrome information from all the blocks can be classically combined to decode correlated errors due to error propagation within the logical gadget. This modification does not affect the fault-tolerance property of the whole circuit and is used when we construct the logical $\textsc{CZ}$ gadget later.  

For fault-tolerant simulation of an ideal circuit, we replace each operation in the ideal circuit by the corresponding logical gadget followed by an \textsc{EC} gadget in each output block (except the measurement gadget which has no output). To discuss the robustness of the simulation circuit, we consider \emph{extended gadgets}. An \emph{extended gadget} is defined as a logical gadget together with the leading and trailing \textsc{EC} gadgets. It can be shown that if there are no more than $t$ faults inside every extended gadget of the simulation circuit, then the properties of the simulation gadgets stated above will ensure that the circuit simulates the ideal circuit correctly. This was proven rigorously in Ref.~\cite{aliferis2007level} for arbitrary local noise and applies to our situation of local amplitude-damping noise. 

In particular, for the memory setting, where we want to store information for a certain time, the argument can be understood as follows. An extended memory gadget includes a waiting step sandwiched between two \textsc{EC} gadgets. Note that the extended gadgets are not separated; rather they overlap: The leading \textsc{EC} of one extended gadget is the trailing \textsc{EC} of the previous extended gadget. Assume there are no more than $t$ faults inside any extended gadget. Consider a particular extended gadget with $s$ errors in the incoming state, $r_1, r_2, r_3$ faults in the leading \textsc{EC}, the waiting step, and the trailing \textsc{EC} respectively, such that $r_1+r_2+r_3 \leq t$. Assume that $s+r_1 \leq t$. Property (P1) then ensures that there are at most $r_1$ errors in the output of the leading \textsc{EC}, thus at most $r_1 + r_2 \leq t$ errors in the input to the trailing \textsc{EC}. Invoking property (P1) one more time, we see that the output of the trailing \textsc{EC} has at most $r_3$ errors. We also see that the incoming state to the following extended gadget, which is the input state to the current trailing \textsc{EC}, has at most $r_1+r_2$ errors. If we now consider the extended gadget that follows, the total number of errors in the incoming state and faults in the leading \textsc{EC} is $r_1+r_2+r_3 \leq t$. This justifies our assumption at the beginning that $s+r_1 \leq t$ since the previous extended gadget behaves in exactly the same way. By stringing together the chain of extended gadgets, we can conclude that the final output state has no more than $t$ errors, thus an ideal decoder attached at the end will decode it correctly.

\section{Proof of Lemma \ref{lemma:XXmeasurement}}
\label{app:ProofThm1}

Here, we provide a proof of Lemma~\ref{lemma:XXmeasurement} stated in the main text.

\smallskip
\noindent \textbf{Lemma \ref{lemma:XXmeasurement}.}
Let $e_i$ and $e_{i+1}$ be the $X$-type error sequences for a pair of successive rows, caused by $\ell \leq t$ errors in the $\overline{XX}$-stabilizer measurement. It is possible to ensure $\textsc{diff}(e_i, e_{i+1}) \leq \ell$ by fixing the order in which the ancillary qubit is coupled to the extended data qubits, as given in Fig.~\ref{fig:XX_stabilizer}.

\begin{proof}
	First, note that a damping error on the flag qubit does not propagate any error to the extended data qubits. It can only propagate $Z$-type error to the ancillary qubit and corrupt its measurement outcome.
	
	Next, consider damping errors only in the ancillary qubit. As damping errors beyond the first one have no additional effect on the qubit, we need only consider a single $\cF_a$ error. In general, an $\cF_a$ error on the ancillary qubit can propagate $X$ errors to multiple extended data qubits in a row, possibly causing more than $t$ qubits in that row to be labeled with an $X$. However, we can ensure $\textsc{diff}(e_i, e_{i+1}) \leq 1$ for every pair of rows by a specific choice of the order in which the ancillary qubit is coupled to the extended data qubits. The zig-zag coupling pattern illustrated in Fig.~\ref{fig:XX_stabilizer} suffices. It is easy to see why this works using the example of the $3\times 3$ code in Fig.~\ref{fig:XX_stabilizer}. If the fault on the ancillary qubit occurs right after coupling with qubit $7$, then the corresponding propagated error sequences are (taking $i=1$) $e_1=XXIII$ and $e_2=XXIII$ and hence $\textsc{diff}(e_1, e_2) = 0$. On the other hand, if the fault occurs right after qubit $3$, then $e_1 = XXXII, e_2=XXIII$ and $\textsc{diff}(e_1, e_2)=1$. In general, the zig-zag coupling pattern ensures that a fault in the ancillary qubit propagates $X$ errors to two rows in a zig-zag manner, thus ensuring $\textsc{diff}(e_i, e_{i+1}) \leq 1$.
	
	If, instead, there are faults only in the extended data qubits and no faults in the ancillary qubit, there are at most $\ell$ qubits marked as \emph{X} and it is straightforward to see that $\textsc{diff}(e_i, e_{i+1})\leq \ell$ for any pair of rows.
	
	More generally, in addition to a fault in the ancillary qubit (which gives $\textsc{diff}_1(e_i, e_{i+1}) \leq 1$), we can allow for at most $\ell-1$ faults distributed among the extended data qubits, leading to $\textsc{diff}_2(e_i, e_{i+1})$ $\leq$ $\ell-1$. Putting these together, the \textsc{diff} function for a pair of error sequences satisfies, $\textsc{diff}(e_i, e_{i+1}) \leq \textsc{diff}_1(e_i, e_{i+1}) + \textsc{diff}_2(e_i, e_{i+1}) \leq \ell$.
\end{proof}

\section{{Optimizing }syndrome extraction}\label{app:ft_syndrome}

In Sec.~\ref{sec:ecunit}, we constructed an EC gadget that is fault tolerant against $\leq t$ amplitude-damping faults. There, we did not consider questions of optimality and resource costs. Here, we carry out a more careful analysis that gives an improved (lower) number of repetitive syndrome measurements needed for fault-tolerant operation. 

We recall that, in the EC gadget, there are three syndrome extraction steps that have to be repeated:
\begin{itemize}
	\item[(i)] \emph{Damping error syndromes:} We repeat the parity measurements on the data qubits of a row at most $t$ times to verify that the trivial syndrome sequence is correctly diagnosed. 
	\item[(ii)] \emph{$\overline{XX}$ measurements:} In the parity-check step after the $\overline{XX}$ measurement, we repeat the parity measurements on the extended data qubits of a row until a syndrome sequence appears $t+1$ times.
	\item[(iii)] \emph{Decoding $Z$ errors:} While decoding the $Z$ errors, we repeat the $Z$-syndrome measurement until a syndrome sequence appears $t+1$ times. 
\end{itemize}
All three steps above can be cast into the problem of fault-tolerant syndrome measurement for the repetition code. In steps (i) and (ii), the repetition code is defined on a row of qubits with $ZZ$ stabilizers between two consecutive qubits; in step (iii), the repetition code is defined across the rows of the lattice with $\overline{XX}$ stabilizers between two consecutive rows. Also, in step (i), we have to verify that the syndrome sequence is indeed the trivial syndrome (all $+1$s), whereas, in steps (ii) and (iii), we have to verify an \textit{a priori} unknown syndrome sequence.

For step (i), if the input state has damping errors but a trivial syndrome sequence is obtained, there must be at least one fault in that round of measurements. Repeating $t$ times, while continuing to obtain a trivial syndrome sequence, means that the last round must be free of measurement faults, and damping errors in the input can be detected since there is at least one error in the input and there are $t-1$ faults in the first $t-1$ rounds; if this is not the case, we would have had $>t$ faults, beyond the requirements of fault tolerance.

For steps (ii) and (iii), we prove the following estimate on the required number of syndrome extraction rounds:
\begin{lemma}[Fault-tolerant syndrome extraction]\label{thm:ft_syndrome}
	To ensure fault tolerance of the \textsc{EC} gadget, it is sufficient to repeat each of the steps (ii) and (iii) no more than $\tfrac{1}{4}(t+2)^2 + 1$ times. 
\end{lemma}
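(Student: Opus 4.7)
The plan is to refine the ``repeat until a syndrome is seen $k$ times'' criterion so that we stop after substantially fewer rounds than the naive bound $t(t+1)+1$. The key ingredients are a two-way classification of faults and an optimisation of the stopping threshold $k$.

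First I would categorise the at most $t$ faults that occur during the repeated measurements into two types. A \emph{data fault} deposits a persistent residual error on a (coupling or data) qubit, thereby changing the underlying ``true'' syndrome of every subsequent round; a \emph{measurement fault} only corrupts the readout of the single round in which it occurs. Writing $f$ and $g$ for their respective counts, with $f+g\le t$, the data faults partition the $R$ rounds into at most $f+1$ \emph{epochs}. Inside the $i$-th epoch, of length $\ell_i$, a single syndrome $s_i$ would be output by any fault-free round, and at most $g_i$ of those $\ell_i$ rounds are corrupted, with $\sum_i g_i\le g$. In particular, syndrome $s_i$ appears at least $\ell_i-g_i$ times.

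Next I would adopt the reduced stopping rule ``stop when some syndrome appears $k$ times,'' with $k$ to be chosen. If the protocol has not yet stopped after $R$ rounds, then $\ell_i-g_i\le k-1$ for every $i$, hence
\begin{equation}
R \;\le\; (f+1)(k-1) + g \;\le\; (f+1)(k-1) + (t-f).
\end{equation}
A secondary sub-argument is needed to rule out ``spurious'' syndromes that arise only from clusters of measurement faults hitting the same pattern; this couples the allowed values of $k$ and $g$ and forces $k$ to sit above a matching threshold. Optimising the two constraints over $(f,g,k)$ subject to $f+g\le t$ and taking the balanced choice $k=\lceil(t+2)/2\rceil$ yields $R\le\tfrac{1}{4}(t+2)^2$, so the stopping criterion must trigger within $\tfrac{1}{4}(t+2)^2+1$ rounds.

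The hard part will be verifying correctness of this reduced threshold: even with $k$ much smaller than $t+1$, the syndrome that first reaches $k$ must lead to a correction consistent with the overall fault-tolerance budget of the EC gadget. I would address this with a short case analysis. If the chosen syndrome belongs to the current (final) epoch, the recovery restores the code space as usual; if it belongs to an earlier epoch, the residual errors after the attempted recovery are precisely the data faults that occurred in later epochs, of which there are at most $t-k+1$, and these are absorbed by the unused portion of the fault budget that feeds into the counting argument of Sec.~\ref{sec:FTSummary}. The proof for step~(iii) is entirely analogous, with $\overline{XX}$ measurements in place of parity checks and rows in place of individual qubits.
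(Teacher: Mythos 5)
Your fault dichotomy (persistent ``data'' faults versus single-round ``measurement'' faults) and your epoch/pigeonhole counting are exactly the paper's type-1/type-2 argument, and the quantity you are implicitly maximizing, $g(f+2)$ subject to $f+g\le t$, does peak at $\tfrac{1}{4}(t+2)^2$. But there is a genuine gap in the decision rule. A \emph{fixed} stopping threshold $k=\lceil(t+2)/2\rceil$ cannot be made safe: an adversary may spend $g\ge k$ measurement faults (possible, since $g$ can be as large as $t$) to make a single spurious syndrome reach the threshold first, and the ``secondary sub-argument'' you defer to would have to enforce $k\ge g+1$ for every admissible $g$, i.e.\ $k\ge t+1$, which collapses back to the naive $t(t+1)+1$ bound. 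Your closing case analysis only treats the benign case where the chosen syndrome belongs to some fault-free epoch; it never excludes the dangerous case where the syndrome that first reaches $k$ was manufactured entirely by measurement faults. Moreover, even granting the stopping rule, the arithmetic does not close: substituting the fixed $k$ into $R\le(f+1)(k-1)+g$ and maximizing over $f+g\le t$ is achieved at $f=t$, $g=0$, giving $R\lesssim (t+1)\lceil t/2\rceil\sim t^2/2$, not $(t+2)^2/4$. The bound $\tfrac14(t+2)^2$ only emerges if the threshold is tied to the unobservable quantity $g$ via $k=g+1$, which is precisely what a fixed-$k$ protocol cannot do.

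The paper resolves this by abandoning the threshold-based stopping rule altogether: it runs a \emph{fixed} number $\tfrac{1}{4}(t+2)^2+1$ of rounds and decodes on the \emph{most frequent} syndrome. The guarantee is then relative rather than absolute: with $r_1$ type-1 and $r_2$ type-2 faults, the $N-r_2 \ge r_2(r_1+1)+1$ measurement-fault-free rounds split into at most $r_1+1$ epochs, so by pigeonhole some genuine syndrome appears at least $r_2+1$ times, whereas any spurious syndrome appears at most $r_2$ times; plurality therefore always selects a genuine (at worst type-1-shifted, hence benign) syndrome without the decoder ever needing to know $r_2$. If you replace your stopping rule with this fixed-length plurality vote, the rest of your counting goes through and reproduces the paper's proof.
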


\begin{proof}
	We want to estimate the minimum number of rounds of syndrome extraction required to make the EC gadget fault tolerant. For step (ii) or (iii), the true syndrome sequence is unknown and we choose a sequence based on majority voting. If a sequence appears $t+1$ times, it is safe to decode based on that sequence as there must be at least one round without faults, or we violate the $\leq t$ faults pre-condition for fault tolerance. We imagine a worst-case scenario where a syndrome sequence appears $t$ times before a fault happens and changes the syndrome to another one. This can repeat $t+1$ times as we allow $t$ faults and one no-fault case. Therefore, naively, if we repeat $t(t+1) +1$ times, we are assured that there will be one sequence appearing $t+1$ times. 
	
	However, we notice that there are two types of faults. For the first type, a fault only introduces new errors to the data qubits without affecting the measurement outcomes in a round. They are not harmful because even if we decode using the obtained syndrome, the decoded error differs from the true error by no more than the number of faults in that round. For the second type, a fault changes the measurement outcomes in a round with or without introducing new errors to the data qubits. In this case, we are at risk of introducing more errors if using the obtained syndrome to decode. 
	
	We consider a more specific scenario where there are $r_1$ type-$1$ faults and $r_2$ type-$2$ faults such that $r_1+r_2 = r \leq t$. We imagine, in the worst case scenario, a faulty syndrome sequence appears $r_2$ times due to $r_2$ faults of the second type. We have to repeat enough times so that at least one syndrome sequence without faults appears more than $r_2$ times. Following the earlier argument, it suffices to repeat $N_{\text{rep}} = r_2 + r_2(r_1+1) + 1 = -r_2^2 + (r+2)r_2 + 1$ times since we are then assured that there is one sequence without type-$2$ faults appearing $r_2+1$ times. Maximizing $N_{\text{rep}}$ with respect to $r_2$ gives $N_{\text{rep}}^{\text{max}} = \tfrac{1}{4}(r+2)^2 + 1 \leq \tfrac{1}{4}(t+2)^2 + 1$. Therefore, repeating $\tfrac{1}{4}(t+2)^2 + 1$ times and choosing to decode the syndrome that appears most often is sufficient for fault tolerance.     
\end{proof}

We conclude by discussing two further approaches to reduce the number of syndrome measurement repetitions, thereby making the EC gadget less costly.
\begin{itemize}
	\item {\it Redundant parity measurement.} We can detect faulty syndrome measurements by adding one more redundant parity measurement between the first and the last qubit [or between the first and the last row in case (iii)]. One round of measurements, therefore, involves $t+1$ parity measurements, giving a length-$(t+1)$ syndrome sequence of $\pm 1$. 
	
	In the absence of faults in the ancilla measurements, i.e., none of the measurement outcomes are flipped, the sequence always has an even number of $-1$s, regardless of the number of faults in the data qubits. An odd number of $-1$s arises only if there are measurement faults. Therefore, we call a syndrome sequence with an even number of $-1$s a \emph{valid} syndrome sequence, and one with an odd number of $-1$s an \emph{invalid} one. Obtaining an invalid sequence thus implies that at least one measurement fault has occurred during the current round of measurements. Also, an even number of measurement faults are needed to turn a valid syndrome sequence into another valid syndrome sequence. It means that every time we obtain a valid sequence, either it is the correct sequence without measurement faults, or there are at least two measurement faults in the sequence. Effectively, we can reduce the number of repetitions by half. 
	
	We then have the following situation for the three steps above. For step (i):  In the syndrome extraction, if the trivial sequence is obtained, we repeat the parity measurements up to $\lfloor \frac{t}{2} \rfloor + 1$ times before concluding that there are no damping errors. Whenever we obtain an invalid sequence, we conclude that there is at least one fault and continue as if we obtained a nontrivial sequence. For steps (ii) and (iii): Replacing $r_2$ by $r_2/2$ in the argument in the proof above, we arrive at $\tfrac{1}{8}(t+2)^2 + 1$ for the maximum number of repetitions. So, for the parity-check step and the $Z$-syndrome measurement, we repeat $\tfrac{1}{8}(t+2)^2 + 1$ times and choose the majority among the set of valid syndromes.

	\item {\it Keeping track of the number of faults.} Up till now, we have allowed for up to $t$ faults in the syndrome measurements. However, if we keep track of the number of faults that have already occurred in the \textsc{EC} gadget, then we can further reduce the number of repetitions needed before we violate the fault-tolerance pre-condition of $\leq t$ faults. For example, if there are at least $t_1$ faults already, then the syndrome measurements need to be fault-tolerant against only $t-t_1$ faults, thereby reducing the number of measurement repetitions. 
	
\end{itemize}

\section{Basic fault-tolerant components}\label{sec:ftgadgets}

We want to construct a universal set of logical gadgets that are fault tolerant. To do that, we first need some basic fault-tolerant operations, which we describe in this section. In the next section (App.~\ref{sec:universal}), we make use of these basic building-block operations to construct our logical gadgets. The basic operations we will construct here are the preparation and measurement gadgets.

\subsection{Measurement gadgets}\label{sec:logicalmeas}

Next, we describe fault-tolerant gadgets that implement the destructive measurements of logical $Z$ and $X$.

\subsubsection{Logical $Z$ measurement}
The logical $Z$ measurement can be realized by measuring each physical qubit in the $Z$ basis. In the absence of any errors, the $Z$ measurements on one row of qubits will give either all $-1$s or all $+1$s [see Eq.~\eqref{eq:01codewords}], which we refer to as a $(-1)$-row or a $(+1)$-row, respectively. The outcome of the logical-$Z$ measurement is $+1$ if there is an even number of $(+1)$-rows over the whole lattice [again, see Eq.~\eqref{eq:01codewords}]; it is $-1$ if there is an odd number of them. 

That this simple logical $Z$ measurement is fault tolerant can be seen as follows. We first note that we need only consider damping errors $\cF_a$s since $\cF_Z$ and $Z$ do not affect the $Z$ measurement.
A faulty physical $Z$ measurement damps a qubit to the $\ket{0}$ state and gives the outcome $+1$. A damped row is thus a mixture of $-1$s and $+1$s. For at most $t$ faults occurring during the measurements, a damped row has at most $t$ $-1$s and at least one $1$ since there are $t+1$ qubits in a row. A damped row is therefore always reliably detected.
Such a damped row should be regarded as $(+1)$-row when deducing the logical-$Z$ measurement outcome since a damping error has effect only when the qubit is originally in the state $|1\rangle$.

\subsubsection{Logical $X$ measurement}
The logical $X$ measurement can be similarly realized by measuring each physical qubit in the $X$ basis. However, unlike the logical $Z$ measurement, this transversal implementation is not fault tolerant. This can be understood by noticing that, in the no-fault situation, the logical $X$ outcome is equal to the parity of the measurement outcomes of any one row of qubits. A single damping error in a row randomizes the parity of that row. As we allow for damping in more than half the rows, we cannot deduce the correct logical outcome even by looking at the individual parity values for all rows.

Instead, we use the circuit in Fig.~\ref{fig:logicalXmeas} for a fault-tolerant logical $X$ measurement. For each row, an ancilla initialized in $\ket{+}$ is coupled to all qubits in the row via \textsc{CNOT} gates. Then, we disentangle one data qubit, say the first qubit, from the other qubits in the row. We then measure this qubit and the ancilla in the $X$ basis. The other qubits in the row are measured in the $Z$ basis.
\begin{figure}[t]
	\centering
	\begin{quantikz}[row sep=0.05cm, column sep = 0.3cm]
    		\lstick{$\ket{+}$} & \ctrl{1} & \ctrl{2} & \qw \hspace{1mm} \dots \hspace{1mm} & \ctrl{4} &  \meterD{X} \\
    		\lstick[wires=4]{$\text{Data qubits}$} & \targ{} & \qw & \qw \hspace{1mm} \dots \hspace{1mm} & \qw &  \ctrl{1} & \qw \hspace{1mm} \dots \hspace{1mm} & \ctrl{3} & \meterD{X} \\
    		\lstick{} & \qw & \targ{} & \qw \hspace{1mm} \dots \hspace{1mm} & \qw & \targ{} & \qw \hspace{1mm} \dots \hspace{1mm} & \qw & \meterD{Z} \\
    		\vdots \\
    		\lstick{} & \qw & \qw & \qw \hspace{1mm} \dots \hspace{1mm} & \targ{} & \qw & \qw \hspace{1mm} \dots \hspace{1mm} & \targ{} & \meterD{Z} \\
	\end{quantikz}
	\caption{Logical $X$ measurement on a single row. The logical outcome is given by the outcomes of the two $X$ measurements. The row is \emph{valid} only if both the $X$ measurements give the same outcome and every $Z$ measurement gives $+1$.}
	\label{fig:logicalXmeas}
\end{figure}
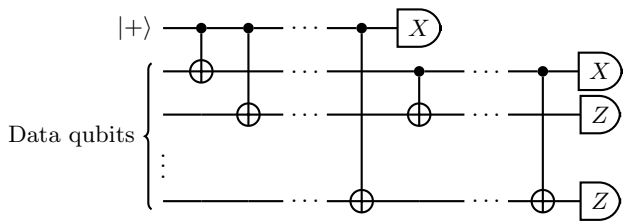 

For input states $\ket{\pm}_L$, the only states of relevance to the logical $X$ measurement, the joint states of the row with the ancilla (denoted $a$) before the final $X$ and $Z$ measurements are, in the absence of any faults, $\ket{\pm}_{a}\ket{\pm}\ket{0}^{\otimes t}$. The outcome of the ancilla or the first qubit is, therefore, the logical outcome. If there are faults, the measurement outcomes can deviate from the ideal case. We say that a row is invalid if at least one of the $Z$ measurements has outcome $-1$, or if the outcome of the ancilla measurement does not match the outcome of the first data qubit. The logical outcome is then determined as the majority outcome among the valid rows. 

The fault tolerance of the measurement circuit in Fig.~\ref{fig:logicalXmeas} can be understood as follows. For each row, a single damping error or a single $\cF_Z$ anywhere either makes the row invalid or does not affect the outcome of the measurements performed on the row, as explained below.
\begin{itemize}
	\item A fault in the initialization or measurement of the ancilla may randomize the outcome of the ancilla measurement, but it has no effect on the data qubits. Therefore, if the outcome of the ancilla measurement matches that of the first data qubit, the latter will be correct. 
	\item A fault in the measurement of the first data qubit also randomizes its outcome, but it has no effect on the ancilla's outcome. Therefore, the row's outcome is correct if the two outcomes match. 
	\item A fault in the \textsc{CNOT}s or an error in the incoming state results in at least one of the $Z$ measurements having outcome $1$. 
\end{itemize}
A second-order error, which can either be a $Z$ error, or two $\cF_{z}$ errors, or two $\cF_a$ errors, might lead to a valid row with an incorrect outcome. For $s$ errors in the input and $r$ faults in the measurement gadget such that $s+r\leq t$, more than half the rows will still be valid and correct, and hence, the logical outcome is always determined correctly from majority voting.

\subsection{Preparation gadgets} \label{sec:prep}
Finally, we describe fault-tolerant preparation of the logical state $\ket{+}_L$.

\subsubsection{Preparation of cat state}
The $(t+1)$-qubit cat state is defined as
\begin{equation}
	\ket{\beta_0} \equiv \tfrac{1}{\sqrt{2}}{\left(\ket{0}^{\otimes (t+1)} + \ket{1}^{\otimes (t+1)}\right)}.
\end{equation}
This state serves as the cornerstone for the preparation of other logical states, for example, the logical $\ket{+}_L$ and the logical resource states (see App.~\ref{sec:universal}). To prepare this state, the idea is to measure the operator $X_1X_2\dots X_{t+1}$ on the input state $\ket{0}^{\otimes (t+1)}$ and post-select the outputs based on the measurement outcomes. We measure $X_1X_2\dots X_{t+1}$ using the circuit in Fig.~\ref{fig:catprep}, where an ancillary qubit initialized in the $\ket{+}$ state is coupled to the data qubits through \textsc{CNOT}s and finally measured in the $X$ basis. We do the measurement twice (see below for the reason), i.e., the outgoing state after the first measurement becomes the input for the second measurement. Finally, we pass the outgoing state after the second measurement through the damping syndrome extraction unit, as described in Sec.~\ref{sec:subcircuit} for the \textsc{EC} gadget. The final output state is accepted only if all the measurements, namely, two $X$-type measurements and all the parity measurements, give $+1$ outcomes. 

\begin{figure}[t]
	\includegraphics[scale=0.38]{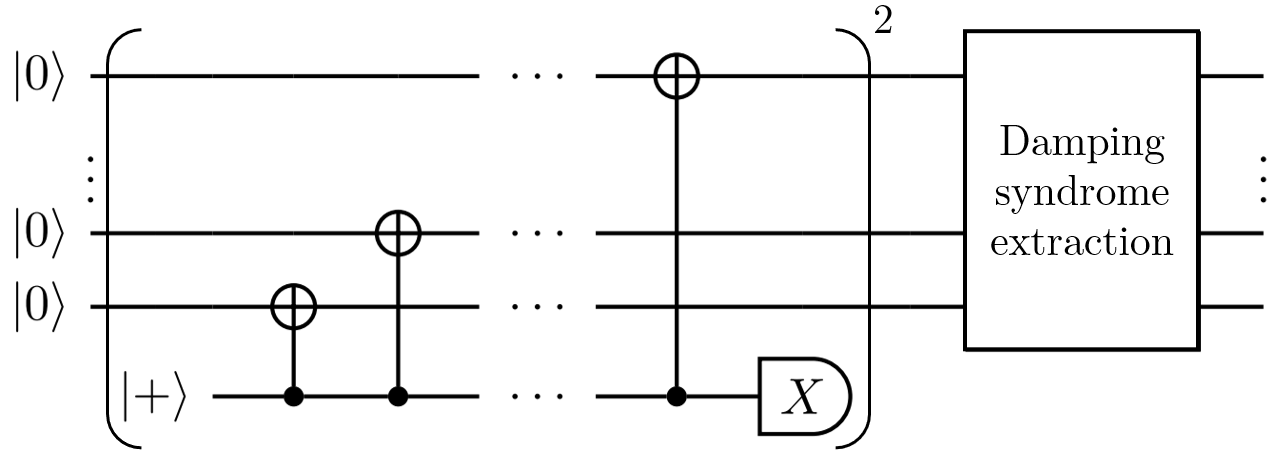}
	\caption{Preparation and verification of the cat state. The $X_1X_2\dots X_{t+1}$ measurement is done twice followed by the damping syndrome extraction unit. The output state is accepted if all the measurements give $+1$ outcomes.}
	\label{fig:catprep}
\end{figure}

That this preparation procedure satisfies the fault-tolerance condition (P2) of App.~\ref{app:ftprops} can be understood as follows. We consider up to $t$ faults anywhere in the preparation gadget, which includes two $X_1X_2\dots X_{t+1}$ measurements and the damping syndrome extraction unit. We will check that the gadget either accepts the output state with the correct fault-tolerance property or rejects it. 
\begin{itemize}
	\item $Z$ or $\cF_z$ errors pass through all the checks but they do not spread. Hence, $t$ $Z$ or $\cF_Z$ errors remain as $t$ $Z$ or $\cF_Z$ errors in the output state. The errors may change the outcome of the $X$ measurements, but they have no effect on the output state. 
	\item Damping errors $\cF_a$ on the data qubits during the two measurements cause $\cF_a$ and/or $X$ errors in the outgoing state of the second measurement. Damping of the ancillary qubits in the middle of the measurements also propagates $X$ errors to the data qubits. The damping syndrome extraction unit (even with faults inside it) ensures that these errors will be detected and the output state will be rejected. Moreover, $s$ faults inside the damping syndrome extraction unit lead to at most $s$ damping errors in the output state (see Sec.~\ref{sec:subcircuit}). Thus, if there are only faults inside the syndrome extraction, the output state is accepted with no more errors than the number of faults inside the unit.
	\item More worrying are damping errors after the preparation of the first ancillary qubit, or right before the $X$ measurement of the first ancillary qubit. The outgoing state of the first $X_1X_2\dots X_{t+1}$ measurement is $\ket{0}^{\otimes (t+1)}$ for the first case and $\ket{1}^{\otimes (t+1)}$ for the second. Note that both states pass the damping syndrome extraction unit. Moreover, $\ket{0/1}^{\otimes (t+1)}$ can be thought of as the cat state $\ket{\beta_0}$ with $(I\pm Z)$ error. Therefore, if we stop after the first measurement, there might be a chance that one damping error causes a $Z$ error in the output state and there is no way to tell if this error comes from a first-order damping error. This violates the fault-tolerance property (P2) stated in App.~\ref{app:ftprops}. The second $X_1X_2\dots X_{t+1}$ measurement is added to remedy this. It projects the $\ket{0/1}^{\otimes (t+1)}$ state to the cat state (half of the time) and hence removes the $Z$ error. The $\ket{0/1}^{\otimes (t+1)}$ state is still possible as an output state if both of the ancillary qubits are damped. However, this is a second-order event and the fault-tolerance condition is still fulfilled.
\end{itemize}

\subsubsection{Preparation of $\ket{+}_L$}
Observe that $\ket{+}_L = \ket{\beta_0}^{\otimes (t+1)}$. Thus, the preparation of $\ket{+}_L$ can be implemented by fault-tolerantly preparing $t+1$ copies of the cat state.

\section{A universal set of logical gadgets}\label{sec:universal}

We are now ready to construct a universal set of logical gadgets, using the basic fault-tolerant components of the previous section. In particular, our universal set of gadgets comprises,
\begin{enumerate}
	\item preparation of the $|0\rangle_L$ and $|+\rangle_L$ states;
	\item measurement of logical $X$ and $Z$;
	\item two-qubit logical gate \textsc{CZ};
	\item single-qubit logical gates $H$, $S$, and $T$.
\end{enumerate}
Here, the logical $H$ is the Hadamard gate, $\overline H\equiv |+\rangle_L\langle 0|+|-\rangle_L\langle 1|$; the logical $S$ is the phase gate, $\overline S\equiv |0\rangle_L\langle 0|+\mi |1\rangle_L\langle 1|$; and the logical $T$ (or $\pi/8$) is the gate $\overline T\equiv |0\rangle_L\langle 0| + \mathrm{e}^{\mi\pi/4}|1\rangle_L\langle 1|$.

We note that our construction requires no logical $X$ or $Z$ gates. $\overline X$ and $\overline Z$ can be tracked virtually when the operations are all Clifford, with no need for physical gates. In the presence of the single non-Clifford logical $T$ gate, we note that $\overline T\overline Z\overline T^\dagger=\overline Z$, while $\overline T\overline X\overline T^\dagger=\overline S\overline X$. $\overline Z$ thus simply commutes past $\overline T$, while for the $\overline X$, we apply the $\overline S$ and commute the remaining $\overline X$ past the operations.

Apart from the preparation of $\ket{0}_L$ and the single-logical-qubit gadgets, all the other gadgets are already constructed in the earlier sections of the paper. We focus our attention on the single-logical-qubit gates first. Our construction of all three single-logical-qubit gadgets makes use of the same teleportation structure shown in Fig.~\ref{fig:HST}. Each requires a resource state as input: $|+\rangle_L$ for $\overline H$, $|\Phi_S\rangle\equiv\frac{1}{\sqrt{2}}{\left(\ket{0}_L + \mi\ket{1}_L\right)}$ for $\overline S$, and $|\Phi_T\rangle \equiv \frac{1}{\sqrt{2}}{\left(\ket{0}_L + \mathrm{e}^{\mi\pi/4}\ket{1}_L\right)}$ for $\overline T$. The circuits also require the use of the logical $X$ measurement gadget, as well as an additional logical gadget, conditioned on the measurement outcome, to complete the gate teleportation. 

\begin{figure}[t]
	\centering
	\begin{quantikz}
    		\lstick{$\ket{\Psi}$} & \gate{\overline{g}_1} & \ctrl{1} & \meterD{\overline{X}} \vcw{1} \\
    		\lstick{$\ket{\Phi}$} & \qw & \control{} &  \gate{\overline{g}_2} & \qw \text{$\overline{G}\ket{\Psi}$}
	\end{quantikz}
	\caption{Single-logical-qubit gadget implementing the gate $\overline G=\overline H, \overline S,$ or $\overline T$. $|\Psi\rangle$ is the incoming state on which $\overline G$ is to be applied. For $\overline G=\overline H$, $|\Phi\rangle=|+\rangle_L$, $\overline g_1$ is the identity, and $\overline g_2=\overline X$; for $\overline G=\overline S$, $|\Phi\rangle=|\Phi_S\rangle$, $\overline g_1=\overline H$, and $\overline g_2=\overline Y=\overline{XZ}$; for $\overline G=\overline T$, $|\Phi\rangle=|\Phi_T\rangle$, $\overline g_1=\overline H$, and $\overline g_2=\overline{SX}$. The gate $\overline g_2$ is applied, conditioned on the outcome of the $\overline X$ measurement. We note that the $\overline X$ and $\overline Z$ contained in the $\overline g_2$s do not need to be applied physically; see the main text.
	}
	\label{fig:HST}
\end{figure}
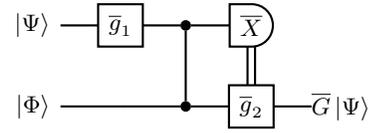

Note that we already have fault-tolerant gadgets for every operation in the gate-teleportation circuits in Fig.~\ref{fig:HST}, as well as for the preparation of $\ket{+}_L$. We thus already have a fault-tolerant $\overline{H}$. This also gives us a fault-tolerant preparation of $\ket{0}_L$: We apply the fault-tolerant $\overline{H}$ to the fault-tolerantly prepared $\ket{+}_L$. What remains then is to describe the fault-tolerant preparation of $\ket{\Phi_S}$ and $\ket{\Phi_T}$. The fault tolerance of the teleportation gadgets then simply follows from the fact that their components are all separately fault tolerant. 

The preparation of $\ket{\Phi_S}$ and $\ket{\Phi_T}$, written together as $\ket{\Phi_{S/T}}$, starts by first preparing the single-row states $\ket{\beta_{S/T}}$, defined as
\begin{align}
	\ket{\beta_S} &= \tfrac{1}{\sqrt{2}}(\ket{0}^{\otimes (t+1)} + \mi \ket{1}^{\otimes (t+1)}), \nonumber\\
	\textrm{and }\quad	\ket{\beta_T} &= \tfrac{1}{\sqrt{2}}(\ket{0}^{\otimes (t+1)} + e^{\mi \pi/4} \ket{1}^{\otimes (t+1)}).
\end{align}
The idea is to prepare a cat state first and apply a single-qubit gate to get the correct phase factor. We use the circuit shown in Fig.~\ref{fig:STprep}, similar to the one used to prepare the cat state. 
We start by doing the measurement of $X_1X_2\dots X_{t+1}$ a total of $t+1$ times, by feeding the outgoing state of each measurement as the input of the next measurement. The outgoing state of the last measurement is passed through the damping syndrome extraction unit and finally, we apply an $S$ or a $T$ gate (depending on whether we are preparing $\ket{\beta_S}$ or $\ket{\beta_T}$) to one of the qubits. We accept the output state to be properly prepared only when all the measurement outcomes are $+1$. 

The fault tolerance of this preparation step can be understood in a very similar way to the analysis given for the cat state. The only difference is the number of repetitions for $X_1X_2\dots X_{t+1}$ measurement. We noted previously that a damping error on the ancilla qubit may cause the outgoing state of a measurement to be one of the trivial states $\ket{0/1}^{\otimes (t+1)}$. Applying a $S/T$ gate to this state is meaningless since the relative phase is already lost. Repeating the measurement $t$ times ensures that at least one of them has its ancilla qubit undamped. The accepted output (if no damping after the damping extraction) is thus a genuine cat state and the single-qubit gate $S/T$ should add the correct relative phase to the state. 

\begin{figure}[t]
	\centering
	\includegraphics[scale=0.33]{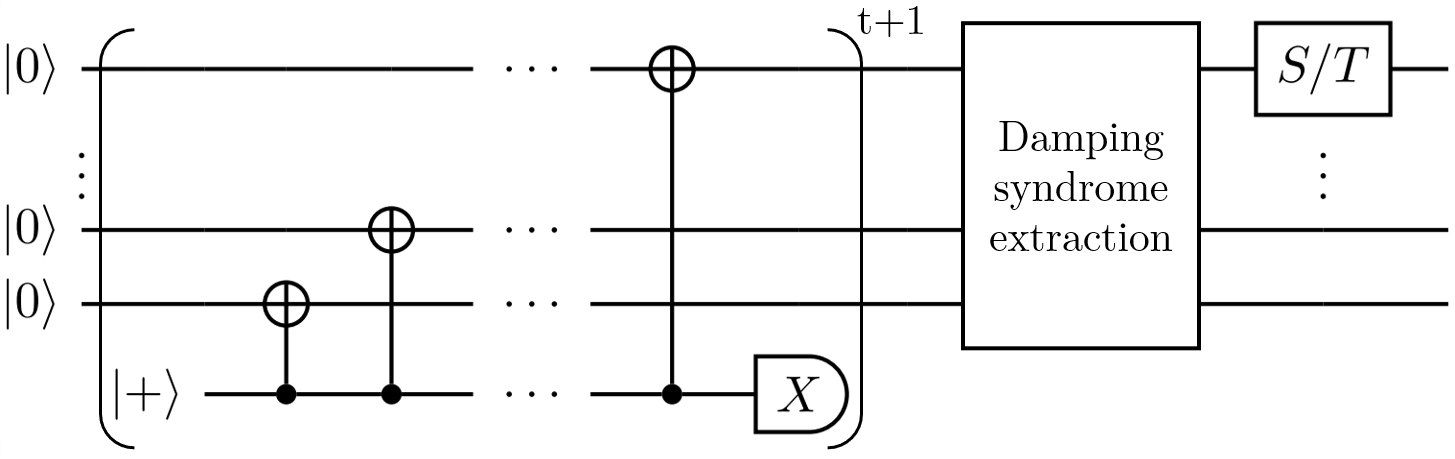}
	\caption{Preparation of single-row states $\ket{\beta_{S/T}}$. The $X_1X_2\dots X_{t+1}$ measurement is done $t+1$ times before applying a damping syndrome extraction and a single-qubit gate $S/T$. The output state is accepted if all the measurements have $+1$ outcomes.}
	\label{fig:STprep}
\end{figure} 

Now, we need to transform $\ket{\beta_{S/T}}$ into the logical states $\ket{\Phi_{S/T}}$ on the full lattice. To do this, we first prepare the other $t$ rows in a state denoted as $\ket{0}_L^{t\times(t+1)}$, that is, a logical $\ket{0}_L$ state of a $t\times(t+1)$ Bacon-Shor code. This can be done as follows. We prepare $\ket{0}_L$ fault-tolerantly as described earlier and measure one row in the $Z$ basis. If the outcomes are all $-1$'s, we obtain the desired state $\ket{0}_L^{t\times(t+1)}$; otherwise, if the outcomes are all $1$s or a mixture of $0$s and $1$s, we have, instead, the state $\ket{1}_L^{t\times(t+1)}$. In this case, we can simply keep track of the Pauli error in the software. 
Next, we measure the $\overline{XX}$-stabilizer between the row prepared in the $\ket{\beta_{S/T}}$ state and any row of the $\ket{0}_L^{t\times(t+1)}$ state. This should be done using a subcircuit discussed earlier in Sec.~\ref{sec:ecunit}. 

An accepted $\ket{\beta_{S/T}}$ state may still have damping errors caused by faults inside the damping extraction or the $S/T$ gate at the end of the circuit in Fig.~\ref{fig:STprep}. In this case, the relative phase in the $\ket{\beta_{S/T}}$ state would be annihilated before it is transferred to the full lattice. The damping syndrome extraction in the subcircuit is therefore needed to detect this error. We thus accept the output state from the subcircuit only if its damping syndrome extraction unit detects no damping.

\section{The \textsc{CCZ} gadget}\label{sec:ccz}

The previous section gives a universal set of logical operations sufficient for arbitrary computation with the Bacon-Shor code in the presence of amplitude-damping noise. Nevertheless, we point out that one can expand the set of fault-tolerant operations to include a three-qubit logical gadget---the \textsc{CCZ} gate---on a general $(t+1)\times (t+1)$ Bacon-Shor code that it is tolerant to $\leq t$ errors arising from the amplitude-damping noise. The \textsc{CCZ} gate also provides an alternate pathway to universality, since the \textsc{CCZ} together with the Hadamard gate implemented by teleportation forms a universal set of logical gates, without requiring the $S$ and $T$ gadgets described earlier.  

Logical \textsc{CCZ} is implemented by applying a set of physical \textsc{CCZ}s transversally for $t+1$ times. In each step the rows of a control lattice are connected to the columns of the other control lattice and, in turn, the rows of the second control lattice are connected to the rows of the target lattice. In each step, the connections to the rows of the target lattice are permuted in a cyclic manner. 

Fig.~\ref{fig:CCZ} illustrates an example of the logical \textsc{CCZ} for the $3\times 3$ Bacon-Shor code. Here, we intersperse the \textsc{CCZ} gadget with EC units between two executions of the transversal \textsc{CCZ}s. The syndrome bits allow us to decide how to correct the final qubits for a fault-tolerant \textsc{CCZ}. For instance, note that a damping error on a qubit in the first lattice could propagate a \textsc{CZ} operation on the qubits in the other two blocks,
\begin{eqnarray}
	\textsc{CCZ} (\cF_a \otimes I \otimes I) \textsc{CCZ}&= &\cF_a\otimes \textsc{CZ}.
\end{eqnarray}
In such a case, we use the syndrome bit information to decide how to apply \textsc{CZ} to undo the propagated errors. Other such checks can be done to ensure the fault tolerance of the gadget, but we leave the details for the reader to verify.

\begin{figure}[t]
	\includegraphics[trim=10mm 70mm 25mm 10mm, clip, scale=0.3]{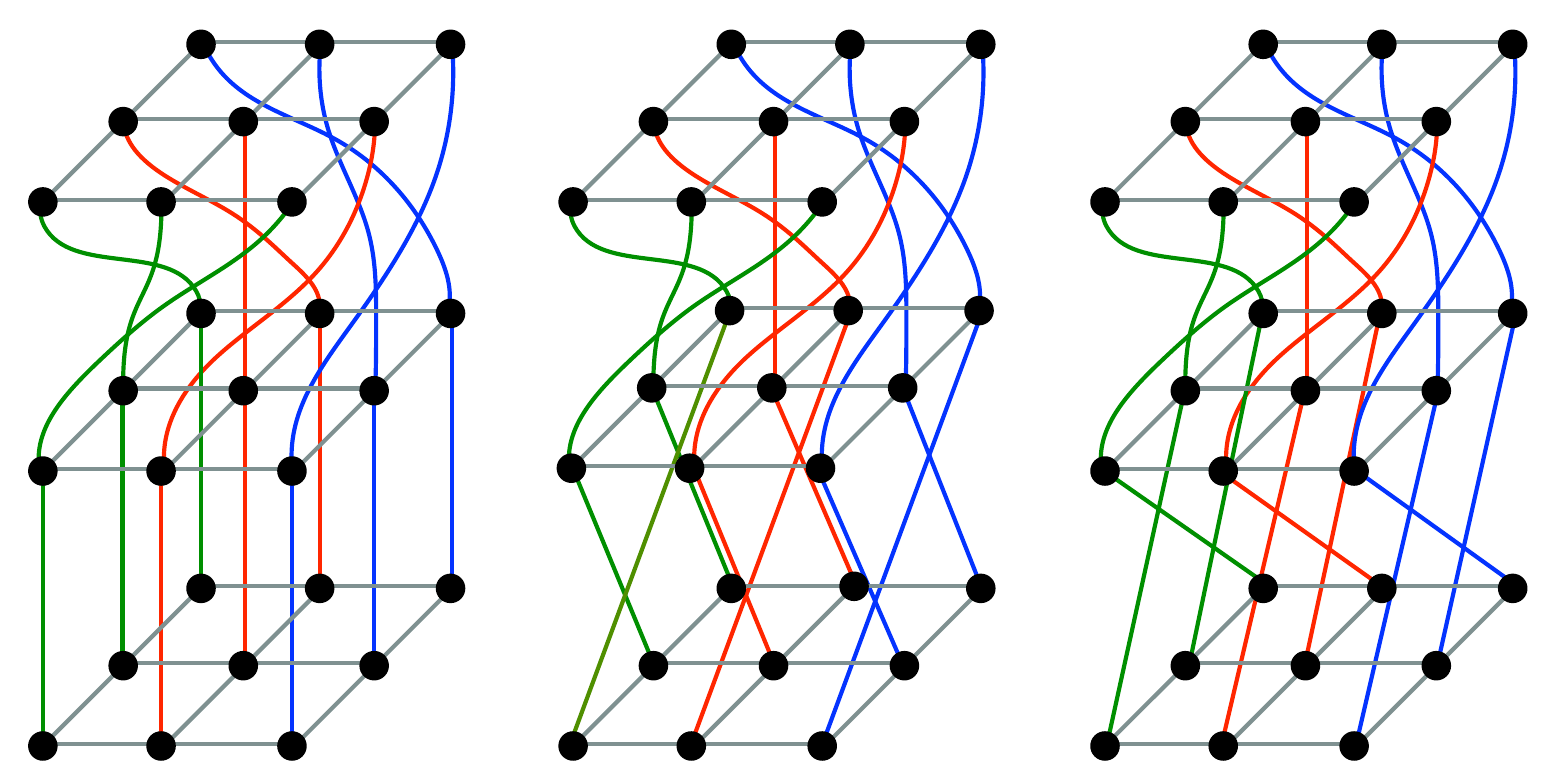}
	\caption{Implementation of the encoded \textsc{CCZ} gate for the $3 \times 3$ [$(t+1)\times (t+1)$] Bacon-Shor code in three ($t+1$) steps, going from left to right. The qubits in different code blocks that participate in the same physical \textsc{CCZ} gate are linked by the same color.} 
	\label{fig:CCZ}
\end{figure}

\section{Pseudothreshold calculation}\label{sec:ft_pth}
Finally, we prove the result stated in the main text, that the EC gadget has $O(t^6)$ fault locations; this was used in Sec.~\ref{sec:threshold} to estimate the pseudothreshold.
We first count the number of locations in a single \textsc{EC} subcircuit [see Fig.~\ref{fig:ECBigPicture}(b)], assuming the efficient syndrome extraction strategy described in App.~\ref{app:ft_syndrome}:
\begin{itemize}
	\item Coupling: There are $2t+1$ locations per row, and hence $2(2t+1)$ locations in total.
	\item Damping syndrome extraction: For a single row, each round of parity measurements is done in $3$ time steps, the first two steps involving the coupling of $t+1$ data qubits to $t+1$ ancilla, and the last step is the measurement of the ancilla. During that time, the additional ancillary qubits are all idle. Therefore, there are $3(3t+2)$ locations per round. At most $\lfloor t/2 \rfloor+1$ rounds are performed, and hence, there are $6(3t+2)(\lfloor t/2 \rfloor+1)$ possible locations for both rows in a given subcircuit. 
	\item Recovery for $X$ errors: We can delay this step until the end of the subcircuit.
	\item $\overline{X}\overline{X}$ stabilizer measurement: There are $2(2t+1)+2$ qubits involved (including extended data, ancillary, and flag qubits) and the measurement is done in $2(2t+1)+2$ time steps including ancilla-flag decoupling and measurements at the end. Together with the 4 locations in the preparation of ancillary and flag qubits, there are $16(t+1)^2+4$ locations in total.
	\item Parity checks: In this step, parity measurements are done on the full set of extended data qubits. Hence, each round has $3\times 2(2t+1)$ locations. If we repeat the parity measurements until a valid syndrome sequence appears $(\lfloor t/2 \rfloor+1)$ times, in the worst case we need $(t+1)(\lfloor t/2 \rfloor+1) + 1$ rounds. A more optimized scheme needs only $\lfloor (t+2)^2/8 \rfloor+ 1 $ rounds. In total, there are at most $12(2t+1)(\lfloor (t+2)^2/8 \rfloor+ 1)$ locations for both rows. 
	\item Decoupling and recovery: We decouple and measure all the extra qubits in the $Z$ basis. The recovery for $X$ errors takes up one more time step. In total, there are $6(2t+1)$ locations. 
\end{itemize}
Taking all the above steps together, in total there are at most $N_{\text{sub}} = 16t^2+72t+40 + 12(2t+1)\lfloor\frac{(t+2)^2}{8}\rfloor + 6(3t+2)(\lfloor t/2 \rfloor+1)$ locations in a subcircuit. 

For odd $t$, there are an even number of rows in the lattice and we can perform $(t+1)/2$ subcircuits in parallel. Recall from Sec.~\ref{sec:ecunit} that one round of $Z$-syndrome extraction needs $t+1$ subcircuits and this is repeated until a valid sequence appears a certain number of times. Since a second-order event is required to convert a valid $Z$-syndrome sequence into another valid sequence, in the worst case, similarly to the parity-measurement step in a subcircuit, we need to repeat it $\lfloor (t/2+2)^2/4 \rfloor+ 1$ times. Therefore, in total, there are at most $N_{\text{odd}} = (\lfloor(t/2+2)^2/4\rfloor + 1)(t+1)N_{\text{sub}}$ locations in an \textsc{EC} gadget for odd $t$. 

For even $t$, there are an odd number of rows in the lattice. Hence, we can perform $t/2$ subcircuits in parallel with one row left idle. The total number of locations in one round is therefore $\tfrac{t}{2}N_{\text{sub}}+N_{\text{idle}}$, where $N_{\text{idle}} = (t+1)\left(3(t+1)\lfloor t/2 \rfloor+4t+9\right)$ is the number of locations in the idling row. As in the odd $t$ case, we also need to repeat the syndrome extraction $\lfloor (t/2+2)^2/4 \rfloor+ 1$ times in the worst case. However, in this case we can schedule the subcircuits so that after $t+1$ parallel steps, we measure the $Z$-syndrome $t/2$ times. Therefore, in total, there are roughly at most $N_{\text{even}} = \lfloor t/8+2\rfloor(t+1)(\tfrac{t}{2}N_{\text{sub}}+N_{\text{idle}})$ locations in an \textsc{EC} gadget for even $t$.

\end{document}